\newcommand\blfootnote[1]{%
  \begingroup
  \renewcommand\thefootnote{}\footnote{#1}%
  \addtocounter{footnote}{-1}%
  \endgroup
}
\def\bd{ \textbf{d} } 
\def\bp{ \textbf{p} } 
\def\ba{ \textbf{a} } 
\def\bA{ {\mathbf{A}} }
\def\bD{ {\mathbf{D}} }
\def\bI{ {\mathbf{I}} }
\def\bP{ {\mathbf{P}} }
\def\bepsilon{ {\mathbf{\epsilon}} }
\def\bPhi{ {\mathbf{\Phi}} }
\def\b1{ {\mathbf{1}} }
\def\nn{{ \parallel   }}
\def\RR{{ \mathbb{R}  }}
\def\PP{{ \mathbb{P}  }}
\def\EE{{ \mathbb{E}  }}
\def\NN{{ \mathbb{N}  }}
\def\bx{{ \mathbf{x}  }}
\def\by{{ \mathbf{y}  }}
\def\be{{ \mathbf{e}  }}
\def\bv{{ \mathbf{v}  }}
\def\bc{{ \mathbf{c}  }}
\newtheorem{theorem}{Theorem}
\newtheorem{lemma}{Lemma}
\newtheorem{proposition}{Proposition}
\newtheorem{assumption}{Assumption}
\begin{document}
\title{Decentralized Adaptive Search using the Noisy 20 Questions Framework in Time-Varying Networks}
\author{Theodoros Tsiligkaridis, \textit{Member, IEEE}}

\maketitle

\begin{abstract}
This paper considers the problem of adaptively searching for an unknown target using multiple agents connected through a time-varying network topology. Agents are equipped with sensors capable of fast information processing, and we propose a decentralized collaborative algorithm for controlling their search given noisy observations. Specifically, we propose decentralized extensions of the adaptive query-based search strategy that combines elements from the 20 questions approach and social learning. Under standard assumptions on the time-varying network dynamics, we prove convergence to correct consensus on the value of the parameter as the number of iterations go to infinity. The convergence analysis takes a novel approach using martingale-based techniques combined with spectral graph theory. Our results establish that stability and consistency can be maintained even with one-way updating and randomized pairwise averaging, thus providing a scalable low complexity method with performance guarantees. We illustrate the effectiveness of our algorithm for random network topologies.
\end{abstract}


\blfootnote{The material in this paper was presented in part at the 2016 IEEE International Symposium on Information Theory (ISIT), Barcelona, Spain.

T. Tsiligkaridis is with MIT Lincoln Laboratory, Lexington, MA 02421 USA (email: ttsili@ll.mit.edu).
}

\section{Introduction} \label{sec:intro}
Consider a set of agents that try to estimate a parameter, e.g., estimate a target state or location, collectively. The agents are connected by a time-varying information sharing network and can periodically query one of their local neighbors about the target location. In this paper we adopt a generic observation model based on query-response models where the queries are functions of agents' local information and successive queries are determined by a feedback control policy. Specifically, in the 20 questions-type model considered in this paper, the observation of each agent is coupled with the query region chosen by that agent, which is a function of its current local belief.

A centralized collaborative 20 questions framework was proposed and studied in \cite{Tsiligkaridis:TIT:2014}, where a global centralized controller jointly or sequentially formulates optimal queries about target location for all agents. This work was later extended in \cite{OnDecEstActQ:2015} to the decentralized setting, in which each agent formulates his own query based on his local information and exchanges beliefs with its neighbors in a synchronous fashion (i.e, updating the beliefs of all agents simultaneously at each update step). The proposed decentralized algorithm therein consisted of two stages: 1) local belief update; and 2) local information sharing. In stage 1 each agent implements the bisection query policy of \cite{Tsiligkaridis:TIT:2014} to update their local belief function. In stage 2 the local belief functions are averaged over nearest neighborhoods in the information sharing network. This two-stage algorithm was proven to converge to a consensus estimate of the true state, assuming synchronous updating of all agents' beliefs in the network and irreducibility of the social interaction graph.

The decentralized collaborative 20 questions problem is applicable to large scale collaborative stochastic search applications where there is no centralized authority. Examples include: object tracking in camera networks \cite{Sznitman:2010}; road tracking from satellite remote sensing networks \cite{Geman:1996}; and wide area surveillance networks \cite{CastroNowak07}. Other applications may include extending active testing approaches in the decentralized setting for classification problems, for instance in vision, recommendation systems, and epidemic networks. The 20 questions paradigm is motivated by asking the correct type of questions in the correct order and is applicable to various other domains where computational effort and time are critical resources to manage.

In this paper, we consider a variant of the two-stage decentralized collaborative algorithm of \cite{OnDecEstActQ:2015} by relaxing the assumption of a fixed network topology. We consider time-varying network topologies in which two randomly chosen agents interact at each update step, giving rise to asynchronous updates (i.e., beliefs of all agents are not updated at each update step). We analyze the convergence properties of the two-stage asynchronous decentralized collaborative 20 questions algorithm under appropriate conditions. This asynchronous model is applicable to practical sensor networks where agents may be located in large geometric distances, and as a result their wireless communications are unreliable or intermittent due to path occlusions or other environmental effects, and lead to time-varying network topologies. Our analysis is based on smoothing techniques and martingale convergence theory in similar spirit to \cite{OnDecEstActQ:2015}. However, the randomness due to the time-varying network topology and lack of strong connectivity at each time instant introduce additional complications in the analysis that were absent when analyzing the static network case of adaptive agents in \cite{OnDecEstActQ:2015}.

In addition to theoretical analysis of the convergence of the proposed algorithm, numerical studies of performance are provided showing interesting information behavior that information sharing yields. The benefit of our asynchronous approach is that the asynchronous decentralized algorithm attains similar performance as its synchronous counterpart introduced and analyzed in \cite{OnDecEstActQ:2015}. The lack of synchronization our approach offers is of great practical interest because synchronization of a large number of agents can be difficult \cite{Wu:2011}. Furthermore, the popular time-division-multiple-access (TDMA) communication protocol for distributed networks is only applicable to synchronized networks. Even though the synchronous update scheme in \cite{OnDecEstActQ:2015} is proven to be convergent to the correct limit, it is still an open question whether the same algorithm converges with an asynchronous implementation. Counterexamples showing that asynchronous updates do not converge although synchronous updates converge are presented in \cite{Xia:2014} for the standard consensus problem.

\subsection{Prior Work}
The noisy 20 questions problem, also known as Ulam's game, was introduced by Renyi \cite{Renyi:1961} and was later rediscovered by Ulam \cite{Ulam:1976}. The first work making the connection between communication with feedback and noisy search appeared in \cite{BZ}. The probabilistic bisection algorithm dates back to the work of Horstein \cite{Horstein:1963}, where it was originally proposed and analyzed heuristically in the contest of communication with noiseless feedback over the binary symmetric channel. This algorithm was shown to achieve capacity for arbitrary memoryless channels in \cite{Shayevitz:2007, Shayevitz:2011}. The probabilistic bisection algorithm was generalized to multiple players in \cite{Tsiligkaridis:TIT:2014} in the centralized setting, and decentralized algorithms for probabilistic bisection search were proposed in \cite{OnDecEstActQ:2015}.

Our work also differs from the works on 20 questions/active stochastic search of Jedynak, et al., \cite{Jedynak12}, Castro \& Nowak \cite{CastroNowak07}, Waeber, et al., \cite{Waeber:2013}, and Tsiligkaridis, et al., \cite{Tsiligkaridis:TIT:2014} because we consider intermediate local belief sharing between agents after each local bisection and update. In addition, in contrast to previous work, in the proposed framework each agent incorporates the beliefs of its neighbors in a way that is agnostic of its neighbors' error probabilities. The analysis of \cite{Jadbabaie:2012, Molavi:2013} does not apply to our model since we consider controlled observations, although we use a form of the social learning model of \cite{Jadbabaie:2012, Molavi:2013}. While a randomized distributed averaging/consensus problem was analyzed in \cite{Boyd:2006}, the convergence analysis is not applicable because we consider new information injected in the dynamical system at each iteration (controlled information gathering) in addition to randomized information sharing.

Although consensus to the true target location holds for the degenerate case of no agent collaboration by using results in the existing literature, collaboration improves the rate of convergence of the estimation error. As shown in the numerical results in this paper, the error decays faster as a function of iterations. This is the primary motivation for studying such collaborative signal processing algorithms. However, proving convergence to the correct consensus is the first step in analyzing such algorithms, and it is by no means a trivial one. In fact, even for the simple single-agent case, only very recently, Waeber, et al \cite{Waeber:2013} were able to prove the first rigorous convergence rate result for the continuous probabilistic bisection algorithm. The focus of this paper is to establish convergence of decentralized algorithms for probabilistic bisection in time-varying networks.


\section{Notation} \label{sec:notation}
We define $X^*$ the true parameter, the target state in the sequel, and its domain as the unit interval $\mathcal{X}=[0,1]$. Let $\mathcal{B}(\mathcal{X})$ be the set of all Borel-measurable subsets $B \subseteq \mathcal{X}$. Let $\mathcal{N}=\{1,\dots,M\}$ index the $M$ agents in an interaction network, denoted by the vertex set $\mathcal{N}$ and the directed edges joining agents at time $t \in \NN$ are captured by $E(t)$. Let $\bA_t=\{a_{i,j}(t)\}$ denote the interaction matrix at time $t$, which is a stochastic matrix (i.e., nonnegative entries with rows summing to unity). At each time $t$, the time-varying network structure is modeled by the directed graph $(\mathcal{N},E(t))$, where
\begin{equation*}
	E(t) = \{(j,i): [\bA_t]_{i,j} > 0 \}
\end{equation*}
Let $\bA_{i\to j}$ denote the interaction matrix when agent $i$ performs a Bayesian update based on its query and averages beliefs with agent $j$. In our model, a random agent $i$ is chosen with probability $q_i$ and a collaborating agent $j$ is chosen with probability $P_{i,j}$ at each update step. Thus, the interaction matrix $\bA_t = \bA_{i\to j}$ is chosen with probability $q_i P_{i,j}$, and nodes $i$ and $j$ collaborate. The matrix $\bP=\{P_{i,j}\}$ contains the probabilistic weights for collaboration between agents and are zero when there is no edge in the information sharing network at any time; if $P_{i,j}=0$, then agent $i$ cannot collaborate with agent $j$ at any time.

Define the probability space $(\Omega, \mathcal F, \mathbb P)$ consisting of the sample space $\Omega$ generating the unknown state $X^*$ and the observations $\{Y_{i,t+1}\}$ at times $t=0, 1, \ldots$, an event space $\mathcal F$ and a probability measure $\mathbb P$. The expectation operator $\mathbb E$ is defined with respect to $\mathbb P$.

Define the belief of the $i$-th agent at time $t$ on $\mathcal{X}$ as the posterior density $p_{i,t}(x)$  of target state $x\in \mathcal X$ based on all of the information available to this agent at this time. Define the $M\times 1$ vector $\bp_t(x)=[p_{1,t}(x),\dots,p_{M,t}(x)]^T$ for each $x \in \mathcal{X}$. For any $B\in \mathcal{B}(\mathcal{X})$, define $\bP_t(B)$ as the vector of probabilities with $i$-th element equal to $\int_B p_{i,t}(x) dx$. We define the query point/target estimate of the $i$-th agent as $\hat{X}_{i,t}$. The query point is the right boundary of the region $A_{i,t}=[0,\hat{X}_{i,t}]$. We let $F_{i,t}(a) = P_{i,t}([0,a])=\int_{0}^a p_{i,t}(x) dx$ denote the cumulative distribution function associated with the density $p_{i,t}(\cdot)$.

We assume that a randomly chosen agent $i$ constructs a query at time $t$ of the form ``does $X^*$ lie in the region $A_{i,t} \subset \mathcal{X}$?''. We indicate this query with the binary variable $Z_{i,t}=I(X^* \in A_{i,t})$ to which agent $i$ responds with a binary response $Y_{i,t+1}$, which is correct with probability $1-\epsilon_i$, and without loss of generality $\epsilon_i \leq 1/2$. This error model is equivalent to a binary symmetric channel (BSC) with crossover probability $\epsilon_i$. The query region $A_{i,t}$ depends on the accumulated information up to time $t$ at agent $i$. Define the nested sequence of event spaces $\mathcal{F}_t$, $\mathcal{F}_{t-1} \subset \mathcal{F}_{t}$, for all $t\geq 0$, generated by the sequence of queries and responses. The queries $\{A_{i,t}\}_{t\geq 0}$ are measurable with respect to this filtration. Define the canonical basis vectors $\be_i\in \RR^M$ as $[\be_i]_j=I(j=i)$. The notation \textit{i.p.} denotes convergence in probability and \textit{a.s.} denotes almost-sure convergence.

\section{Asynchronous Decentralized 20 Questions}

Motivated by the work of \cite{Tsiligkaridis:TIT:2014, OnDecEstActQ:2015} and \cite{Jadbabaie:2012}, we proceed as follows. As in the fixed-topology decentralized algorithm in \cite{OnDecEstActQ:2015}, starting with a collection of prior distributions $\{p_{i,0}(x)\}_{i \in \mathcal{N}}$ on $X^*$, the goal is to reach consensus across the network through repeated querying and information sharing. Our proposed asynchronous decentralized collaborative 20 questions algorithm consists of two stages. Motivated by the optimality of the bisection rule for symmetric channels proved by Jedynak, et al., \cite{Jedynak12}, the first stage bisects the posterior of a randomly chosen agent $i\in \mathcal{N}$ (say with probability $q_i$) at $\hat{X}_{i,t}$ and refines its own belief through Bayes' rule \footnote{In the asynchronous time model of Boyd, et al., \cite{Boyd:2006}, each agent has a clock that ticks according to a rate-one Poisson process. As a result, the inner ticks at each agent $i$ are distributed according to a rate-one exponential distribution, independently across agents and over time. This corresponds to a single clock ticking according to a rate-$M$ Poisson process at times $\{t_k: k \geq 1\}$, where $\{\tilde{t}_k=t_{k+1}-t_k\}$ are i.i.d. exponential random variables of rate $M$. Let $i_k\in\mathcal{N}$ denote the agent whose clock ticked at time $t_k$. It follows that $i_k$ are i.i.d. $\text{Unif}(\mathcal{N})$, i.e., $q_i=1/M$.}. In the second stage, agent $i$ collaborates with an agent $j$ with probability $P_{i,j}$ by averaging their beliefs (see Algorithm 1).


Some simplifications occur in Algorithm 1. The normalizing factor $\mathcal{Z}_{i,t}(y)$ is given by $\int_{\mathcal{X}} p_{i,t}(x) l_i(y|x,\hat{X}_{i,t}) dx$ and can be shown to be equal to $1/2$ (see proof of Lemma \ref{lemma:lemmaA} in Appendix A). The bisection query points are medians $\hat{X}_{i,t}=F_{i,t}^{-1}(1/2)$ and the observation distribution is:
\begin{equation*}
	l_i(y|x,\hat{X}_{i,t}) = f_1^{(i)}(y) I(x \leq \hat{X}_{i,t}) + f_0^{(i)}(y) I(x > \hat{X}_{i,t}).
\end{equation*}
where the distributions $f_z^{(i)}(\cdot)$ are defined in (\ref{eq:BSC}). We note that the conditioning on the query region $A_{i,t}$ (or query point $\hat{X}_{i,t}$ in one-dimension) is necessary as the binary observation $\by$ is linked to the query in the 20 questions model, in which the correct answer is obtained with probability $1-\epsilon_i$ and the wrong answer is obtained with probability $\epsilon_i$.
We remark that the density $l_i(y|x,\hat{X}_{i,t})$ depends on the query point $\hat{X}_{i,t}$, which is time-varying and as a result, the density $l_i(y|x,\hat{X}_{i,t})$ is time-varying. 


\begin{algorithm}
\caption{ Asynchronous Decentralized Bisection Search Algorithm for Time-Varying Networks}
\label{alg:alg1}
\begin{algorithmic}[1]
\STATE \textbf{Input:}  {$\mathcal{N}, \bP=\{P_{i,j}: (i,j)\in \mathcal{N}\times \mathcal{N}\}, \{\epsilon_i: i\in \mathcal{N}\}$}
\STATE \textbf{Output:} {$\{\hat{X}_{i,t}:i\in \mathcal{N}\}$}

	\STATE Initialize $p_{i,0}(\cdot)$ to be positive everywhere.
	
	\REPEAT
		\STATE Choose an agent $i\in \mathcal{N}$ randomly (e.g., with probability $q_i$). \\
		\STATE	\quad Bisect posterior density at median: $\hat{X}_{i,t}=F_{i,t}^{-1}(1/2)$. \\
		\STATE	\quad Obtain (noisy) binary response $y_{i,t+1} \in \{0,1\}$. \\
		\STATE Choose a collaborating agent $j\in \mathcal{N}$ with probability $P_{i,j}$. \\
	  \STATE	\quad Belief update: \\
		\begin{align}
			\label{eq:randomized_density_update}
			\begin{split}
			p_{i,t+1}(x) &= \alpha_i p_{i,t}(x) \frac{l_i(y_{i,t+1}|x,\hat{X}_{i,t})}{\mathcal{Z}_{i,t}(y_{i,t+1})} + (1-\alpha_i) p_{j,t}(x), \\
			p_{j,t+1}(x) &= p_{i,t+1}(x), \\
			p_{l,t+1}(x) &= p_{l,t}(x), \forall l\neq i, l\neq j, \qquad x\in \mathcal{X}. 
			\end{split}
		\end{align}
		where the observation p.m.f. is:
		\begin{align*}
			l_i(y|x,\hat{X}_{i,t}) &= f_1^{(i)}(y) I(x \leq \hat{X}_{i,t}) + f_0^{(i)}(y) I(x > \hat{X}_{i,t}),\\
			&\qquad \qquad y \in \mathcal{Y}
		\end{align*}
		and $f_1^{(i)}(y)=(1-\epsilon_i)^{I(y=1)} \epsilon_i^{I(y=0)}, f_0^{(i)}(y)=1-f_1^{(i)}(y)$.
	\UNTIL {convergence}	
\end{algorithmic}
\end{algorithm}
We remark that Algorithm 1 is fully decentralized, i.e., only local information processing and local information sharing is needed. Furthermore, it operates in an asynchronous fashion as the belief updates are not occurring simultaneously for all agents in the network. At each step, two agents collaborate with each other and update their beliefs. This belief averaging leads to a non-Bayesian social learning scheme in similar spirit to \cite{Jadbabaie:2012, Molavi:2013}.

Collaboration becomes essential in certain scenarios when some agents in the network are completely unreliable $\epsilon_i = 1/2$, and helps tremendously when agents are unreliable, i.e., $\epsilon_i \approx 1/2$. When an agent is completely unreliable, its learning capacity is zero and cannot localize the target on its own. With collaboration, neighboring agents may steer the belief of unreliable agents closer to the true location (also see Section \ref{sec:simulations} for several experiments).

\section{Convergence of Asynchronous Decentralized Estimation Algorithm} \label{sec:convergence}

In this section convergence properties of Algorithm 1 are established under the assumptions below. The two main theoretical results,  Thm. 1 and Thm. 2, establish that the proposed algorithm attains asymptotic agreement (consensus) and asymptotic consistency, respectively. Several technical lemmas are necessary and are proven in the appendices. A block diagram showing the interdependencies between the lemmas and theorems in this section is shown in Fig. \ref{fig:guide}.
\begin{figure}[ht]
	\centering
		\includegraphics[width=0.40\textwidth]{./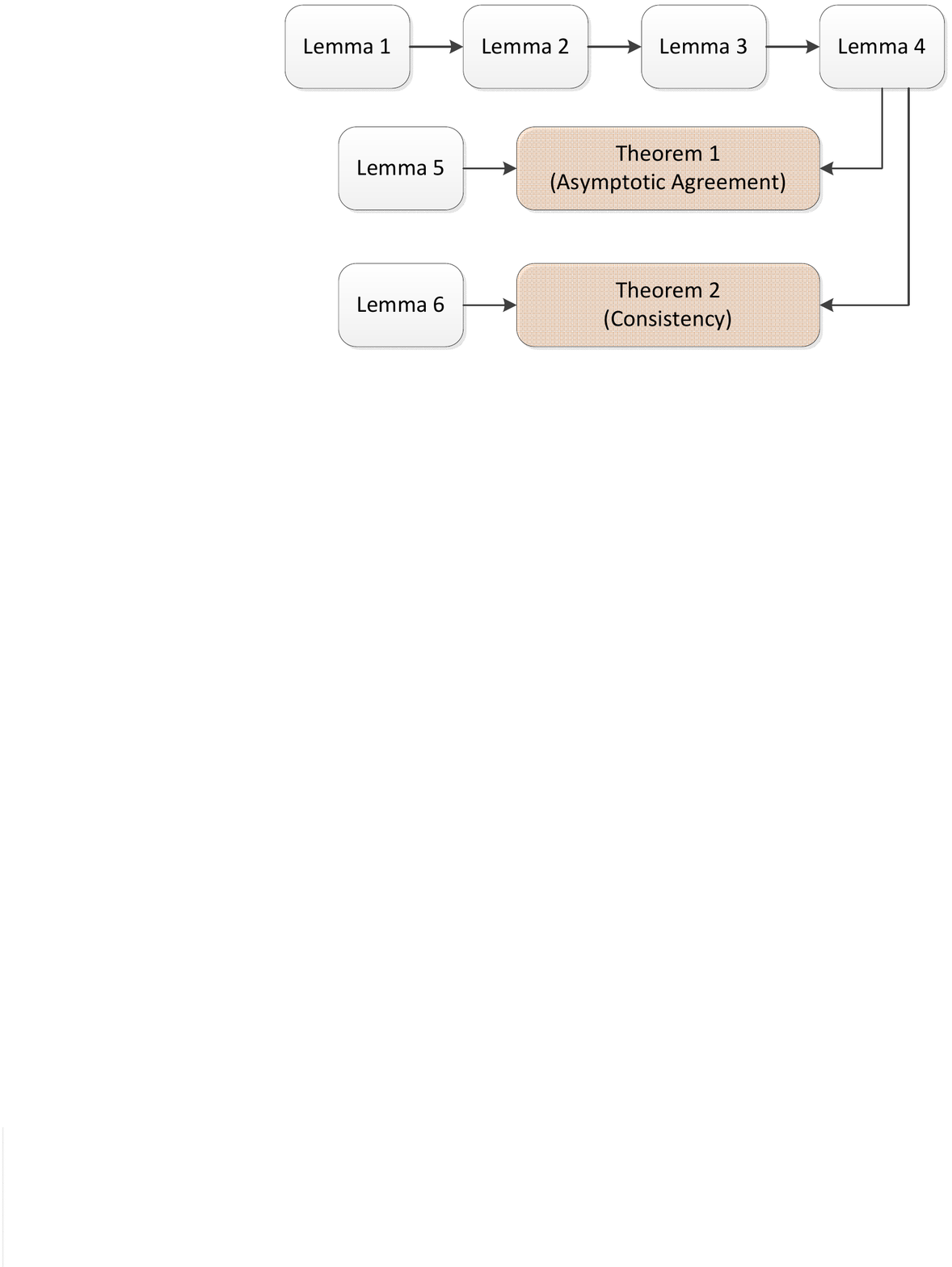}
	\caption{ The flow of the analysis for establishing convergence of the asynchronous decentralized 20-questions algorithm to the correct consensus limit. }
	\label{fig:guide}
\end{figure}

According to Algorithm \ref{alg:alg1}, the density evolution described by (\ref{eq:randomized_density_update}) can be written in matrix form as:
\begin{equation} \label{eq:rand_density_evol}
	\bp_{t+1}(x) = (\bA_t + \bD_t(x)) \bp_t(x)
\end{equation}
where $\bA_t$ accounts for the averaging action of the iteration and $\bD_t(x)$ is the innovation. According to Algorithm \ref{alg:alg1}, it follows that with probability $q_i P_{i,j}$, agents $i$ and $j$ collaborate, resulting in:
\begin{align}
	\bA_t    	&= \bA_{i \to j} = \bI_M + (\be_i+\be_j)(\alpha_{i}\be_i+(1-\alpha_{i})\be_j)^T \nonumber \\
						&\qquad \qquad \qquad - (\be_i\be_i^T+\be_j\be_j^T) \label{eq:rand_A} \\
	\bD_t(x) 	&= \bD_{i \to j}(x) = \alpha_i \left(\frac{l_i(Y_{i,t+1}|x,\hat{X}_{i,t})}{\mathcal{Z}_{i,t}(Y_{i,t+1})} - 1 \right) (\be_i+\be_j) \be_i^T \label{eq:rand_D}
\end{align}

\subsection{Assumptions}

To simplify the analysis of Algorithm \ref{alg:alg1}, we make the following assumptions, which are comparable to those made in \cite{Tsiligkaridis:TIT:2014, OnDecEstActQ:2015} and \cite{Jadbabaie:2012}.

Each agent's response is governed by the conditional distribution:
\begin{align}
	l_i(y_i|x,A_{i,t}) &\stackrel{def}{=} P(Y_{i,t+1}=y_i|A_{i,t},X^*=x) \nonumber \\
		&= \left\{ \begin{array}{ll} f_1^{(i)}(y_i), & x\in A_{i,t} \\ f_0^{(i)}(y_i), & x\notin A_{i,t} \end{array} \right.  \label{eq:exp}
\end{align}

\begin{assumption} (Memoryless Binary Symmetric Channels) \label{assump:BSC}
	We model the agents' responses as independent (memoryless) binary symmetric channels (BSC) \cite{CoverThomas} with crossover probabilities $\epsilon_i\in (0,1/2]$. The probability mass function $f_z^{(i)}(Y_{i,t+1}) = P(Y_{i,t+1}|Z_{i,t}=z)$ is:
\begin{equation} \label{eq:BSC}
	f_z^{(i)}(y_i) = \Bigg\{ \begin{array}{ll} 1-\epsilon_i, & y_i=z \\ \epsilon_i, & y_i\neq z \end{array}
\end{equation}
for $i=1,\dots,M, z\in \{0,1\}$. Define the set of agents $\mathcal{I}_1 = \{i\in \mathcal{N}: \epsilon_i < 1/2\}$ and assume the set $\mathcal{I}_1$ is nonempty.
\end{assumption}
The condition $\epsilon_i<1/2$ implies that the response of an agent $i$ is `almost correct'. Agents $i\notin \mathcal{I}_1$ do not have the ability to localize the target on their own.

\begin{assumption} (Average Strong Connectivity) \label{assump:strong_connectivity}
	As in \cite{Jadbabaie:2012}, we assume that the network is strongly connected on average, i.e. $\bA=\EE[\bA_t]=\sum_{i,j} q_i P_{i,j} \bA_{i\to j}$ is irreducible. Furthermore, we assume $\underline{\alpha} = \min_i \alpha_i >0$ to ensure collaboration.
\end{assumption}

The following standard assumption about the network connectivity over time \cite{Tsitsiklis:1986, Blondel:2005, Nedic:2009, Chen:2012, Patterson:2014} will be made to prove the correctness of the asymptotic limit of CDF's in Thm. \ref{thm:thmB}. We remark that this assumption is not needed to prove asymptotic consensus (Thm. \ref{thm:thmA}).
\begin{assumption} (Strong Connectivity over Interval) \label{assump:bounded_times}
	There exists $R$ such the graph $(\mathcal{N},E(t) \cup E(t+1) \cup\dots\cup E(t+R-1))$ is the same strongly connected graph for all $t$.
\end{assumption}
This condition is true with high probability as $R$ gets large.

%

\subsection{Analysis}
The principal component that enables the proofs of convergence of Algorithm 1 is Equation (\ref{eq:randomized_density_update}), that propagates the vector of belief functions forward in time. In (\ref{eq:rand_density_evol}), $\bA_{t}=\bA_{i_t\to j_t}$ is the time-varying interaction matrix between agents $i_t$ and $j_t$ and $\bD_{t}(x)=\bD_{i_t\to j_t}(x)$ is a diagonal time-varying matrix dependent on the response of agent $i$, $y_{i,t+1}$, the query region $A_{i_t,t} \subset \mathcal{X}$ and the state $x \in \mathcal{X}$. For intuition, (\ref{eq:rand_density_evol}) can be written as a sum of two terms:
\begin{align*}
	&\bp_{t+1}(x) = \bp_t(x) - \be_i p_{i,t}(x) - \be_j p_{j,t}(x) \\
		&+ (\be_i+\be_j) \left( \alpha_i p_{i,t}(x) \frac{l_i(y_{i,t+1}|x,A_{i,t})}{\mathcal{Z}_{i,t}(y_{i,t+1})} + (1-\alpha_i) p_{j,t}(x) \right)
\end{align*}
The first term simply zeroes out the $i$ and $j$ components and the second term fills them in with the average between the updated belief of agent $i$ and the current belief of agent $j$. The rest of the components are left intact.

Proposition \ref{prop:Amat} provides bounds on the dynamic range of $\bA\bx$, where $\bx$ is any arbitrary vector (see Theorem 3.1 in \cite{Seneta:1981}). The coefficient of ergodicity of an interaction matrix $\bA$ is defined as \cite{Seneta:1981, Ipsen:2011}:
\begin{equation} \label{eq:coeff_ergodicity}
	\tau_1(\bA) \stackrel{\text{def}}{=} \frac{1}{2} \max_{i\neq j} \nn \bA^T(\be_i-\be_j)\nn_1 = \frac{1}{2} \max_{i\neq j} \sum_{l=1}^M |a_{i,l}-a_{j,l}|
\end{equation}
This coefficient satisfies $\tau_1(\cdot)\in [0,1]$. The most non-ergodic interaction matrix is the identity matrix $\bI_M$, for which $\tau_1(\bI_M)=1$ and there is no information sharing, and the other extreme is the full-rank matrix $\tau_1(\frac{1}{M}\b1\b1^T)=0$.

\begin{proposition} (Contraction Property of $\bA$) \label{prop:Amat}
	Assume $\bA=\{a_{i,j}\}$ is a $M\times M$ stochastic matrix. Let $\bx$ be an arbitrary non-negative vector. Then, we have for all pairs $(i,j)$:
	\begin{equation*}
		[\bA\bx]_i-[\bA\bx]_j \leq \tau_1(\bA) \left( \max_i x_i - \min_i x_i \right)
	\end{equation*}
\end{proposition}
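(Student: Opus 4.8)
The plan is to prove the bound by exploiting the fact that, because $\bA$ is row-stochastic, subtracting any constant from every entry of $\bx$ leaves both sides of the claimed inequality unchanged. Concretely, fix a pair of indices $(i,j)$ and write
\begin{equation*}
	[\bA\bx]_i-[\bA\bx]_j = \sum_{l=1}^M (a_{i,l}-a_{j,l}) x_l.
\end{equation*}
Since $\sum_l a_{i,l} = \sum_l a_{j,l} = 1$, the coefficient vector $\bc = \bA^T(\be_i-\be_j)$ satisfies $\sum_l c_l = 0$. Hence for any scalar $\mu$ we have $\sum_l c_l x_l = \sum_l c_l (x_l - \mu)$, so we may recenter the vector $\bx$ without loss of generality.

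The next step is to split the sum according to the sign of $c_l = a_{i,l}-a_{j,l}$. Let $S_+ = \{l : c_l > 0\}$ and $S_- = \{l : c_l < 0\}$, and set $s = \sum_{l\in S_+} c_l = -\sum_{l\in S_-} c_l \geq 0$; by the definition of the coefficient of ergodicity in (\ref{eq:coeff_ergodicity}), $s = \frac{1}{2}\sum_l |c_l| \leq \tau_1(\bA)$. Choosing $\mu$ to be, say, $\min_k x_k$ so that $x_l - \mu \geq 0$ for all $l$, and bounding $x_l - \mu \leq \max_k x_k - \min_k x_k$ on $S_+$ while using $x_l - \mu \geq 0$ on $S_-$, we obtain
\begin{align*}
	\sum_{l=1}^M c_l x_l = \sum_{l=1}^M c_l (x_l - \mu)
		&\leq \sum_{l\in S_+} c_l (x_l - \mu) \\
		&\leq \Big(\max_k x_k - \min_k x_k\Big) \sum_{l\in S_+} c_l \\
		&= s\,\Big(\max_k x_k - \min_k x_k\Big) \leq \tau_1(\bA)\Big(\max_k x_k - \min_k x_k\Big),
\end{align*}
which is exactly the claimed bound. (The nonnegativity hypothesis on $\bx$ is not strictly needed once one recenters, but it makes the choice $\mu = \min_k x_k \ge 0$ clean and matches how the proposition is invoked later on belief vectors $\bp_t(x)$.)

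I do not expect any real obstacle here: the argument is a standard Birkhoff-style contraction estimate and everything reduces to the sign-splitting bookkeeping. The only point requiring a little care is making sure the recentering is justified before any positivity is used — i.e., emphasizing that $\sum_l c_l = 0$ is what allows replacing $x_l$ by $x_l - \mu$ — and then noting that the two one-sided bounds (dropping the $S_-$ terms, which are $\leq 0$ after recentering, and bounding the $S_+$ terms by the oscillation of $\bx$) combine with the identity $s = \frac12\|\bc\|_1$ to yield the stated constant $\tau_1(\bA)$. One could alternatively cite Theorem 3.1 in \cite{Seneta:1981} directly, but giving the two-line self-contained argument above is cleaner for the reader.
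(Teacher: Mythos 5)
Your proof is correct: the recentering step is justified by $\sum_l c_l = 0$, the sign-splitting gives $\sum_{l\in S_+} c_l = \tfrac{1}{2}\sum_l |c_l| \le \tau_1(\bA)$, and the two one-sided bounds combine exactly as you state. The paper itself offers no proof of this proposition; it simply points to Theorem 3.1 of \cite{Seneta:1981}, and your argument is precisely the standard derivation behind that cited result, so there is no substantive divergence to report.
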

Although $\bA_t$ is not irreducible, $\bA=\EE[\bA_t]$ is, which is used in Lemmas 2,3. Next, we recall a tight smooth approximation to the non-smooth maximum and minima operators. Similar results have appeared in Prop. 1 in \cite{Chen:2013} and p. 72 in \cite{Boyd:ConvexOptimization}.
\begin{proposition} \label{prop:lse} (Tight Smooth Approximation to Maximum/Minimum Operator)
	Let $\ba \in \RR^M$ be an arbitrary vector. Then, we have for all $\gamma > 0$:
	\begin{equation} \label{eq:lse_approx_max}
		\max_i a_i \leq \frac{1}{\gamma} \log\left( \sum_{i=1}^M e^{\gamma a_i} \right) \leq \max_i a_i + \frac{\log M}{\gamma}
	\end{equation}
	and
	\begin{equation} \label{eq:lse_approx_min}
		\min_i a_i \geq -\frac{1}{\gamma} \log\left( \sum_{i=1}^M e^{-\gamma a_i} \right) \geq \min_i a_i - \frac{\log M}{\gamma}
	\end{equation}
\end{proposition}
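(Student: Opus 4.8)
The plan is to establish the two-sided bound (\ref{eq:lse_approx_max}) directly and then deduce (\ref{eq:lse_approx_min}) by applying the first result to the negated vector. Write $a_* = \max_i a_i$. For the left-hand inequality in (\ref{eq:lse_approx_max}), I would observe that every summand $e^{\gamma a_i}$ is positive and at least one of them equals $e^{\gamma a_*}$, so $\sum_{i=1}^M e^{\gamma a_i} \geq e^{\gamma a_*}$; since $\log(\cdot)$ is increasing and $\gamma > 0$, taking logarithms and dividing by $\gamma$ preserves the inequality and yields $\frac{1}{\gamma}\log\bigl(\sum_{i=1}^M e^{\gamma a_i}\bigr) \geq a_*$.

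For the right-hand inequality, I would use $e^{\gamma a_i} \leq e^{\gamma a_*}$ for every $i$ (again because $\gamma > 0$ and $a_i \leq a_*$), hence $\sum_{i=1}^M e^{\gamma a_i} \leq M e^{\gamma a_*}$. Taking logarithms and dividing by $\gamma$ gives $\frac{1}{\gamma}\log\bigl(\sum_{i=1}^M e^{\gamma a_i}\bigr) \leq a_* + \frac{\log M}{\gamma}$, which together with the previous step establishes (\ref{eq:lse_approx_max}).

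To obtain (\ref{eq:lse_approx_min}), I would apply (\ref{eq:lse_approx_max}) with $\ba$ replaced by $-\ba$, using the identity $\max_i(-a_i) = -\min_i a_i$. This produces $-\min_i a_i \leq \frac{1}{\gamma}\log\bigl(\sum_{i=1}^M e^{-\gamma a_i}\bigr) \leq -\min_i a_i + \frac{\log M}{\gamma}$; multiplying through by $-1$ reverses every inequality and gives exactly (\ref{eq:lse_approx_min}).

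There is no genuine obstacle here: the argument uses only the monotonicity of $\exp$ and $\log$ together with the elementary sandwich $e^{\gamma a_*} \leq \sum_{i=1}^M e^{\gamma a_i} \leq M e^{\gamma a_*}$. The one point worth stating carefully is the role of $\gamma > 0$, which is what makes both the exponential comparison and the subsequent division by $\gamma$ order-preserving. If the tightness of the constant $\frac{\log M}{\gamma}$ is desired, it suffices to take $\ba = \mathbf{0}$, for which the middle quantity equals $\frac{\log M}{\gamma}$ while $\max_i a_i = 0$.
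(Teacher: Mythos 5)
Your proof is correct: the sandwich $e^{\gamma \max_i a_i} \leq \sum_{i=1}^M e^{\gamma a_i} \leq M e^{\gamma \max_i a_i}$ followed by taking logarithms, together with the reduction of (\ref{eq:lse_approx_min}) to (\ref{eq:lse_approx_max}) via $\ba \mapsto -\ba$, is exactly the standard argument. The paper does not spell out a proof at all (it defers to Prop.~1 of \cite{Chen:2013} and p.~72 of \cite{Boyd:ConvexOptimization}), and your write-up supplies precisely the argument those references contain, including the correct observation that $\gamma>0$ is what keeps every step order-preserving.
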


\begin{lemma} \label{lemma:lemmaA}
	Consider Algorithm \ref{alg:alg1}. Let $B \in \mathcal{B}(\mathcal{X})$. Then, we have:
	\begin{equation*}
		\EE\left[\int_B \bD_t(x) \bp_t(x) dx \Bigg| \mathcal{F}_t \right] = 0.
	\end{equation*}
	where $\bD_t(x)$ was defined in (\ref{eq:rand_D}).
\end{lemma}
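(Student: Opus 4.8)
My plan is to prove the stronger pointwise identity $\EE[\bD_t(x)\mid\mathcal F_t]=\mathbf 0$ for every fixed $x\in\mathcal X$, from which the lemma is immediate. Indeed $\bp_t(x)$, $F_{i,t}$ and the medians $\hat X_{i,t}=F_{i,t}^{-1}(1/2)$ are formed from information available by time $t$ and hence are $\mathcal F_t$-measurable, and on the compact domain $\mathcal X=[0,1]$ one has $\|\bD_t(x)\bp_t(x)\|\le \max_i p_{i,t}(x)$, which is integrable; so a conditional Fubini argument together with the measurability of $\bp_t(x)$ gives
\begin{equation*}
\EE\!\left[\int_B \bD_t(x)\bp_t(x)\,dx\;\Big|\;\mathcal F_t\right]=\int_B \EE[\bD_t(x)\mid\mathcal F_t]\,\bp_t(x)\,dx=\mathbf 0 .
\end{equation*}
Thus it suffices to analyze a single innovation matrix $\bD_{i\to j}(x)$.

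First I would record the normalizer. Condition on the event that agent $i$ queries at time $t$ and collaborates with agent $j$ (an event of probability $q_iP_{i,j}$; we add the identity of the activated pair to the conditioning if it is not already determined by $\mathcal F_t$). Then $p_{i,t}$ and $\hat X_{i,t}$ are $\mathcal F_t$-measurable, and the definition, the observation p.m.f., and $F_{i,t}(\hat X_{i,t})=\tfrac12$ give
\begin{equation*}
\mathcal Z_{i,t}(y)=\int_{\mathcal X}p_{i,t}(x)\,l_i(y|x,\hat X_{i,t})\,dx=f_1^{(i)}(y)\,F_{i,t}(\hat X_{i,t})+f_0^{(i)}(y)\bigl(1-F_{i,t}(\hat X_{i,t})\bigr)=\tfrac12\bigl(f_1^{(i)}(y)+f_0^{(i)}(y)\bigr)=\tfrac12,
\end{equation*}
using $f_0^{(i)}(y)=1-f_1^{(i)}(y)$ from (\ref{eq:BSC}); in particular $\mathcal Z_{i,t}(Y_{i,t+1})\equiv\tfrac12$ is a deterministic positive constant given $\mathcal F_t$, so the ratio in (\ref{eq:rand_D}) is harmless. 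Substituting into (\ref{eq:rand_D}) yields, on this event, $\bD_t(x)=\alpha_i\bigl(2\,l_i(Y_{i,t+1}|x,\hat X_{i,t})-1\bigr)(\be_i+\be_j)\be_i^T$, so everything reduces to the scalar $\EE[\,2\,l_i(Y_{i,t+1}|x,\hat X_{i,t})-1\mid\mathcal F_t,\ i\to j\,]$. By the observation model (\ref{eq:exp}) and the tower property, conditionally on $\mathcal F_t$ and on agent $i$ having queried, $Y_{i,t+1}$ has the predictive p.m.f. $\mathcal Z_{i,t}(\cdot)=(\tfrac12,\tfrac12)$, so $\EE[l_i(Y_{i,t+1}|x,\hat X_{i,t})\mid\mathcal F_t,\ i\to j]=\sum_{y\in\mathcal Y}\tfrac12\,l_i(y|x,\hat X_{i,t})=\tfrac12$, using $\sum_{y}l_i(y|x,\hat X_{i,t})=1$ for each fixed $x$; hence the scalar vanishes and $\EE[\bD_t(x)\mid\mathcal F_t,\ i\to j]=\mathbf 0$ for all $x$. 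Averaging over the finitely many activated pairs with weights $q_iP_{i,j}$ preserves the zero, giving $\EE[\bD_t(x)\mid\mathcal F_t]=\mathbf 0$ pointwise, and then the lemma by the first paragraph.

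The routine parts are the $\mathcal Z_{i,t}=\tfrac12$ identity, the domination estimate used for conditional Fubini, and the finite mixture over activated pairs. The step that genuinely uses the structure of the model is the assertion that, given $\mathcal F_t$ and the identity of the querying agent, $Y_{i,t+1}$ is distributed according to $\mathcal Z_{i,t}$: this is precisely where the bisection-at-the-median rule and the symmetry of the BSC in (\ref{eq:BSC}) enter, collapsing the predictive response law to a fair coin and forcing the innovation to be conditionally mean-zero. Everything downstream of that is algebra.
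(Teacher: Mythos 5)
Your proof is correct and follows essentially the same route as the paper's: both hinge on the median bisection forcing $\mathcal{Z}_{i,t}(y)=\tfrac12$, hence a uniform predictive law for $Y_{i,t+1}$ given $\mathcal{F}_t$, which makes the innovation conditionally mean-zero, followed by averaging over the randomly activated pair $(i,j)$. The only difference is cosmetic --- you establish the pointwise identity $\EE[\bD_t(x)\mid\mathcal{F}_t]=\mathbf{0}$ and then integrate via conditional Fubini, whereas the paper integrates over $B$ first and then takes the expectation over the response.
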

\begin{proof}
	See Appendix A.
\end{proof}

\begin{lemma} \label{lemma:lemmaB}
	Consider Algorithm \ref{alg:alg1}. Let $B \in \mathcal{B}(\mathcal{X})$. Then, we have $\EE[\bv^T\bP_{t+1}(B)|\mathcal{F}_t]=\bv^T\bP_t(B)$ for some positive vector $v\succ 0$, and $\lim_{t\to\infty} \bv^T\bP_{t}(B)$ exists almost surely.
\end{lemma}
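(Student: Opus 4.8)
The plan is to turn $\{\bv^T\bP_t(B)\}_{t\ge0}$ into a bounded nonnegative martingale for a suitable positive vector $\bv$, and then invoke the martingale convergence theorem. The positive vector will be the left Perron eigenvector of $\bA=\EE[\bA_t]$.

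First I would integrate the matrix recursion (\ref{eq:rand_density_evol}) over $B$ to obtain
\begin{equation*}
	\bP_{t+1}(B) = \bA_t \bP_t(B) + \int_B \bD_t(x)\,\bp_t(x)\, dx .
\end{equation*}
Now condition on $\mathcal{F}_t$. The time-$t$ agent pair $(i_t,j_t)$ is drawn afresh with probabilities $q_{i}P_{i,j}$, independently of $\mathcal{F}_t$, while $\bP_t(B)$ is $\mathcal{F}_t$-measurable; hence $\EE[\bA_t \bP_t(B)\mid\mathcal{F}_t]=\EE[\bA_t]\,\bP_t(B)=\bA\,\bP_t(B)$ with $\bA$ as in Assumption \ref{assump:strong_connectivity}. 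By Lemma \ref{lemma:lemmaA} the innovation term has zero conditional mean, so
\begin{equation*}
	\EE[\bP_{t+1}(B)\mid\mathcal{F}_t] = \bA\,\bP_t(B).
\end{equation*}

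Next, since each $\bA_{i\to j}$ (hence $\bA$) is row-stochastic, we have $\bA\b1=\b1$, so the Perron eigenvalue of $\bA$ is $1$; because $\bA$ is irreducible by Assumption \ref{assump:strong_connectivity}, Perron--Frobenius yields a strictly positive left eigenvector $\bv\succ0$ with $\bv^T\bA=\bv^T$. Left-multiplying the previous display by $\bv^T$ gives $\EE[\bv^T\bP_{t+1}(B)\mid\mathcal{F}_t]=\bv^T\bP_t(B)$, i.e. $\{\bv^T\bP_t(B)\}_{t\ge0}$ is an $\mathcal{F}_t$-martingale. It is nonnegative (each entry of $\bP_t(B)$ lies in $[0,1]$ and $\bv\succ0$) and uniformly bounded by $\bv^T\b1=\sum_i v_i<\infty$. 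The martingale convergence theorem then gives that $\lim_{t\to\infty}\bv^T\bP_t(B)$ exists almost surely.

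The only subtle point is the conditioning step: one must be careful that $\mathcal{F}_t$ records the current beliefs — hence all candidate query points $\hat X_{i,t}$ — but neither the time-$t$ agent selection $(i_t,j_t)$ nor the response $Y_{i_t,t+1}$, so that $\bA_t$ genuinely decouples from the $\mathcal{F}_t$-measurable factor $\bP_t(B)$ and Lemma \ref{lemma:lemmaA} applies verbatim. Everything else is a routine application of Perron--Frobenius and Doob's theorem, so I do not anticipate further obstacles.
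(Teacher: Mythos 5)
Your proposal is correct and follows essentially the same route as the paper's own proof: integrate the recursion (\ref{eq:rand_density_evol}) over $B$, condition on $\mathcal{F}_t$ using the independence of the agent-pair selection (the paper phrases this via the tower property), kill the innovation term with Lemma \ref{lemma:lemmaA}, take $\bv$ to be the positive left Perron eigenvector of the irreducible stochastic matrix $\bA$ from Assumption \ref{assump:strong_connectivity}, and apply the martingale convergence theorem to the resulting bounded nonnegative martingale. No substantive differences.
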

\begin{proof}
	See Appendix B.
\end{proof}
Lemmas \ref{lemma:lemmaA} and \ref{lemma:lemmaB} imply that the term $\int_B \bv^T\bD_t(x)\bp_t(x) dx$ is a martingale difference noise term.

\begin{lemma} \label{lemma:lemmaC}
	Consider Algorithm \ref{alg:alg1} and let $d=1$. Let $B=[0,b] \in \mathcal{B}(\mathcal{X})$. Let $\bv$ denote the positive left eigenvector of $\bA=\EE_{i,j}[\bA_{i\to j}]$. Define the variable:
	\begin{align}
		\Lambda_t(B,P,\bepsilon) &\stackrel{\text{def}}{=} \frac{1}{e^{\bv^T\bP_t(B)}} \Bigg( \sum_{i,j=1}^M q_i P_{i,j} e^{\bv^T\bA_{i\to j}\bP_t(B)} \nonumber \\
			&\times \cosh((v_i+v_j)a_{i,i}(1-2\epsilon_i) \mu_{i,t}(B)) \Bigg) \label{eq:Lambda}
	\end{align}
	where $$\mu_{i,t}(B) \stackrel{\text{def}}{=} \min\{P_{i,t}(B), 1-P_{i,t}(B)\}.$$ Then, we have $\Lambda_t \stackrel{a.s.}{\longrightarrow} 1$ as $t\to\infty$.
\end{lemma}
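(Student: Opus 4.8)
The plan is to identify $\bv^T\bP_t(B)$ as a convergent nonnegative supermartingale (or martingale) and then show that $\Lambda_t$ is essentially a ratio whose numerator and denominator become asymptotically equal because the increments of $\bv^T\bP_t(B)$ vanish. First I would write down the one-step conditional expectation of $e^{\bv^T\bP_{t+1}(B)}$ given $\mathcal{F}_t$. Using the density evolution \eqref{eq:rand_density_evol}, on the event that agents $i,j$ collaborate (probability $q_iP_{i,j}$) we have $\bP_{t+1}(B)=\bA_{i\to j}\bP_t(B)+\int_B \bD_{i\to j}(x)\bp_t(x)\,dx$. The innovation term, integrated against $\bv^T$, is a scalar of the form $c_{i,j}\cdot\xi_{i,t+1}$ where $\xi_{i,t+1}=\alpha_i\bigl(\tfrac{l_i(Y_{i,t+1}\mid x,\hat X_{i,t})}{\mathcal{Z}_{i,t}(Y_{i,t+1})}-1\bigr)$ integrated over $B$; because $B=[0,b]$ and $\hat X_{i,t}$ is the median, this integral evaluates (as in the proof of Lemma \ref{lemma:lemmaA}, where $\mathcal{Z}_{i,t}=1/2$) to something proportional to $(1-2\epsilon_i)$ times a $\pm$ sign depending on $Y_{i,t+1}$, with magnitude controlled by $\mu_{i,t}(B)$. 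Taking the conditional expectation over $Y_{i,t+1}$ (a BSC output) of $e^{(v_i+v_j)a_{i,i}(1-2\epsilon_i)(\cdots)}$ produces exactly the $\cosh$ factor appearing in \eqref{eq:Lambda}. Summing over the collaboration choices $(i,j)$ and dividing by $e^{\bv^T\bP_t(B)}$ then yields the identity
\[
\EE\!\left[e^{\bv^T\bP_{t+1}(B)}\mid\mathcal{F}_t\right]=\Lambda_t\, e^{\bv^T\bP_t(B)}.
\]

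With this identity in hand, the argument proceeds as follows. Since $\cosh(\cdot)\ge 1$ and $\sum_{i,j}q_iP_{i,j}e^{\bv^T\bA_{i\to j}\bP_t(B)}\ge e^{\bv^T\bA\bP_t(B)}=e^{\bv^T\bP_t(B)}$ by Jensen's inequality and the left-eigenvector property $\bv^T\bA=\bv^T$, we get $\Lambda_t\ge 1$, so $W_t:=e^{\bv^T\bP_t(B)}/\prod_{s<t}\Lambda_s$ is a nonnegative martingale, or alternatively $e^{\bv^T\bP_t(B)}$ is a nonnegative submartingale that is bounded (since $\bv^T\bP_t(B)\le\bv^T\b1$ is bounded above, using $0\le P_{i,t}(B)\le 1$). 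Either way, $e^{\bv^T\bP_t(B)}$ converges almost surely to a finite limit — which also follows directly from Lemma \ref{lemma:lemmaB}, giving a.s. convergence of $\bv^T\bP_t(B)$ itself. Consequently $\EE[e^{\bv^T\bP_{t+1}(B)}\mid\mathcal{F}_t]-e^{\bv^T\bP_t(B)}\to 0$ a.s. (e.g.\ by dominated convergence for conditional expectations, or by summability of the submartingale increments together with Lévy's theorem). Dividing by the convergent, strictly positive quantity $e^{\bv^T\bP_t(B)}$ gives $\Lambda_t-1\to 0$ a.s., which is the claim.

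To make the last step rigorous I would argue that the conditional-expectation increment $\Delta_t:=\EE[e^{\bv^T\bP_{t+1}(B)}\mid\mathcal{F}_t]-e^{\bv^T\bP_t(B)}=(\Lambda_t-1)e^{\bv^T\bP_t(B)}$ is nonnegative and summable in expectation: $\sum_t\EE[\Delta_t]=\sum_t\bigl(\EE[e^{\bv^T\bP_{t+1}(B)}]-\EE[e^{\bv^T\bP_t(B)}]\bigr)$ telescopes and is bounded by the uniform bound on $e^{\bv^T\bP_t(B)}$, hence $\sum_t\Delta_t<\infty$ a.s., forcing $\Delta_t\to 0$ a.s.; since $e^{\bv^T\bP_t(B)}$ is bounded away from $0$ (it exceeds $e^{0}=1$? — more carefully, it converges to a finite positive limit, so is eventually bounded below by a positive constant along a.e.\ sample path), we conclude $\Lambda_t\to 1$ a.s.

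I expect the main obstacle to be the bookkeeping in the one-step computation: carefully verifying that integrating the innovation $\bD_{i\to j}(x)$ against $\bv^T$ over $B=[0,b]$ collapses to $\pm(v_i+v_j)a_{i,i}(1-2\epsilon_i)\mu_{i,t}(B)$, and that averaging $e^{(\cdot)}$ over the BSC response $Y_{i,t+1}$ produces precisely the stated $\cosh$ term with the right argument (the $\min\{P_{i,t}(B),1-P_{i,t}(B)\}$ arising because the sign of the likelihood-ratio increment flips according to whether $b\le\hat X_{i,t}$ or $b>\hat X_{i,t}$, and $\mathcal{Z}_{i,t}=1/2$ rescales the ratio). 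Once that algebraic identity is in place, the martingale/supermartingale convergence and the "vanishing increments" conclusion are routine.
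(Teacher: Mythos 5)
Your proposal is correct and rests on the same core computation as the paper's proof: the identity $\EE[e^{\bv^T\bP_{t+1}(B)}\mid\mathcal{F}_t]=\Lambda_t\,e^{\bv^T\bP_t(B)}$, obtained by conditioning on the collaborating pair $(i,j)$, using $\mathcal{Z}_{i,t}=1/2$, splitting on whether $b\leq\hat{X}_{i,t}$ or $b>\hat{X}_{i,t}$ so that the $\bv$-weighted innovation equals $\pm(v_i+v_j)\alpha_i(1-2\epsilon_i)\mu_{i,t}(B)$, and averaging over the equiprobable BSC responses to get the $\cosh$ factor. The paper likewise observes that $\zeta_t(B)=e^{\bv^T\bP_t(B)}$ is a bounded submartingale (via Lemma \ref{lemma:lemmaB} and Jensen), hence a.s.\ convergent. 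Where you diverge is the last step: the paper argues that the ratio $\beta_{t+1}=\zeta_{t+1}/\zeta_t\to 1$ a.s.\ and then invokes a dominated convergence theorem for conditional expectations (with the varying $\sigma$-algebras $\mathcal{F}_t$) to get $\EE[\beta_{t+1}\mid\mathcal{F}_t]\to 1$; you instead note that the compensator increments $\Delta_t=(\Lambda_t-1)\zeta_t\geq 0$ telescope in expectation, so $\sum_t\Delta_t<\infty$ a.s., forcing $\Delta_t\to 0$, and since $\zeta_t\geq e^0=1$ (as $\bv\succ 0$ and $\bP_t(B)\succeq 0$, so no need for the "eventually bounded below" hedge) this gives $\Lambda_t\to 1$ a.s. Your route is slightly more elementary: it sidesteps the conditional dominated convergence step, which in the paper's form requires care because the conditioning $\sigma$-algebra changes with $t$. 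One cosmetic note: the lemma statement writes $a_{i,i}$ where the derivation produces $\alpha_i$ (an inconsistency already present in the paper); your computation correctly yields $(v_i+v_j)\alpha_i(1-2\epsilon_i)\mu_{i,t}(B)$ as the $\cosh$ argument.
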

\begin{proof}
	See Appendix C.
\end{proof}

\begin{lemma} \label{lemma:lemmaX}
	Consider the same setup as Lemma \ref{lemma:lemmaC}. Then, $\Lambda_t(B) \stackrel{a.s.}{\longrightarrow} 1$ implies $\mu_{i,t}(B) \stackrel{a.s.}{\longrightarrow} 0$ for all $i \in \mathcal{N}$ as $t\to\infty$.
\end{lemma}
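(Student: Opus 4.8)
The plan is to extract, from the almost-sure convergence $\Lambda_t(B) \to 1$, a summable-type control on each $\mu_{i,t}(B)$ by exploiting the strict convexity of $\cosh$. First I would recall that by Proposition \ref{prop:lse} (with $\gamma=1$), and by the definition of $\tau_1(\cdot)$ together with the row-stochasticity of each $\bA_{i\to j}$, one has the pointwise bound
\begin{equation*}
	\frac{1}{e^{\bv^T\bP_t(B)}} \sum_{i,j} q_i P_{i,j}\, e^{\bv^T\bA_{i\to j}\bP_t(B)} \;\le\; 1,
\end{equation*}
or more precisely that $\sum_{i,j} q_i P_{i,j} e^{\bv^T\bA_{i\to j}\bP_t(B)} \le e^{\bv^T \bA \bP_t(B)} \cdot(\text{something that is itself}\le e^{\bv^T\bP_t(B)})$ after using convexity of the exponential in the averaging step; the key structural fact is that $\bv^T\bA = \bv^T$ so the ``innovation-free'' part of the weighted sum is exactly $e^{\bv^T\bP_t(B)}$ up to a Jensen gap. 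Since $\cosh(\cdot)\ge 1$ always, with equality iff its argument is zero, $\Lambda_t$ is a product/sum of a factor that is $\le 1$ and hyperbolic-cosine factors that are $\ge 1$; for the whole expression to converge to $1$, the $\cosh$ terms must be forced down to $1$ and the Jensen gap must vanish.

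The main step is then to turn $\Lambda_t \to 1$ into $\mu_{i,t}(B)\to 0$ for every $i$. I would write $\Lambda_t$ as
\begin{equation*}
	\Lambda_t = \sum_{i,j} w_{i,j,t}\, \cosh\!\big((v_i+v_j)a_{i,i}(1-2\epsilon_i)\mu_{i,t}(B)\big),
\end{equation*}
where $w_{i,j,t} = q_i P_{i,j} e^{\bv^T\bA_{i\to j}\bP_t(B)}/e^{\bv^T\bP_t(B)}$ are nonnegative weights that, from the bound above, satisfy $\sum_{i,j} w_{i,j,t} \le 1$. Using $\cosh(u)\ge 1 + u^2/2$ gives
\begin{equation*}
	\Lambda_t \;\ge\; \sum_{i,j} w_{i,j,t} \;+\; \tfrac12 \sum_{i,j} w_{i,j,t}\,(v_i+v_j)^2 a_{i,i}^2 (1-2\epsilon_i)^2 \mu_{i,t}(B)^2.
\end{equation*}
Because $\Lambda_t\to 1$ and $\sum_{i,j} w_{i,j,t}\le 1$, the nonnegative quadratic remainder must tend to $0$; combined with the fact that $\sum_{i,j}w_{i,j,t}\to 1$ (which follows once we know the quadratic part vanishes and $\Lambda_t\to1$), this shows each weighted term $w_{i,i',t}\,(v_i+v_{i'})^2 a_{i,i}^2(1-2\epsilon_i)^2\mu_{i,t}(B)^2 \to 0$ a.s. Finally I would argue that the weights $w_{i,j,t}$ are bounded below away from $0$ along the relevant indices: the eigenvector entries $v_i$ are strictly positive (Perron–Frobenius, using irreducibility of $\bA$ from Assumption \ref{assump:strong_connectivity}), the probabilities $q_i P_{i,j}$ for edges present in the averaged graph are positive, $a_{i,i}\ge \underline\alpha > 0$ under Assumption \ref{assump:strong_connectivity}, and the exponential ratio $e^{\bv^T\bA_{i\to j}\bP_t(B)}/e^{\bv^T\bP_t(B)}$ lies in a fixed interval bounded away from $0$ since $\bP_t(B)\in[0,1]^M$. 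Hence for every $i\in\mathcal I_1$ we get $\mu_{i,t}(B)\to 0$ a.s.; for $i\notin\mathcal I_1$ the coefficient $(1-2\epsilon_i)$ vanishes, so one obtains $\mu_{i,t}(B)\to0$ for those agents only indirectly, via $\mu_{i,t}(B)$ being controlled by the consensus Lemma (the $v_i>0$ spectral gap forces all $P_{i,t}(B)$ toward a common value, which the reliable agents pin to $0$ or $1$).

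The delicate point — and the one I expect to be the main obstacle — is the justification that the ``averaging factor'' $\sum_{i,j} q_i P_{i,j} e^{\bv^T\bA_{i\to j}\bP_t(B)} \le e^{\bv^T\bP_t(B)}$, i.e. that the only way $\Lambda_t$ can approach its floor of $1$ is for \emph{both} the Jensen gap in this inequality to close \emph{and} the $\cosh$ arguments to vanish, rather than one compensating for the other. This requires a careful use of convexity of $t\mapsto e^t$ along the line segment from $\bP_t(B)$ to $\bA_{i\to j}\bP_t(B)$ and the identity $\sum_{i,j}q_iP_{i,j}\bA_{i\to j}=\bA$ with $\bv^T\bA=\bv^T$; I would isolate this as a short sub-claim (possibly citing the companion computation in the proof of Lemma \ref{lemma:lemmaC} in Appendix C) and then feed it into the $\cosh(u)\ge 1+u^2/2$ argument above. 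Once that separation is in hand, the rest is a routine squeeze argument using positivity of $\bv$ and the uniform positivity of the weights.
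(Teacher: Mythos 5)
Your overall strategy --- exploit $\cosh(u)\ge 1$ with equality iff $u=0$, and use strict positivity of the weights to force each $\mu_{i,t}(B)$ to zero --- is the same idea as the paper's, but the key inequality you build it on points the wrong way, and this breaks the logic of the squeeze. You claim
\begin{equation*}
  \sum_{i,j} w_{i,j,t} \;=\; \frac{1}{e^{\bv^T\bP_t(B)}}\sum_{i,j} q_i P_{i,j}\, e^{\bv^T\bA_{i\to j}\bP_t(B)} \;\le\; 1 .
\end{equation*}
Jensen's inequality applied to the convex map $u\mapsto e^u$, together with $\sum_{i,j}q_iP_{i,j}\bA_{i\to j}=\bA$ and the eigenrelation $\bv^T\bA=\bv^T$, gives exactly the \emph{reverse}: $\EE_{i,j}[e^{\bv^T\bA_{i\to j}\bP_t(B)}]\ge e^{\bv^T\bA\bP_t(B)}=e^{\bv^T\bP_t(B)}$, i.e.\ $\sum_{i,j}w_{i,j,t}\ge 1$; and the exponent $\bv^T(\bA_{i\to j}-\bI)\bP_t(B)$ is genuinely of either sign, so no appeal to row-stochasticity or to $\tau_1(\cdot)$ will deliver your ``$\le$''. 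With your $\le 1$, the step ``the nonnegative quadratic remainder must tend to $0$'' does not follow from $\Lambda_t\to 1$ alone --- a deficit in $\sum_{i,j}w_{i,j,t}$ could absorb a non-vanishing quadratic term --- and your patch, deducing $\sum_{i,j}w_{i,j,t}\to 1$ ``once we know the quadratic part vanishes,'' is circular.

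The repair is that the correct direction of Jensen makes your argument work, and more simply than you wrote it: from $\sum_{i,j}w_{i,j,t}\ge 1$ and $\cosh(u)\ge 1+u^2/2$ one gets directly $\Lambda_t \ge 1+\tfrac12\sum_{i,j}w_{i,j,t}\,u_{i,j,t}^2$ with $u_{i,j,t}=(v_i+v_j)\alpha_i(1-2\epsilon_i)\mu_{i,t}(B)$, so $\Lambda_t\to1$ forces every summand to zero; the uniform lower bound $w_{i,j,t}\ge q_iP_{i,j}e^{-\nn\bv\nn_1}>0$ (valid because $\bP_t(B)\in[0,1]^M$ and $\bA_{i\to j}$ is stochastic) together with $v_i,\alpha_i>0$ then gives $\mu_{i,t}(B)\to0$ for every $i\in\mathcal{I}_1$. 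This is, up to repackaging, the paper's proof: there the two exponentials hidden in $\cosh$ are kept attached to the averaging factor, Jensen is applied once to each, and one lands on $\Lambda_t\ge\cosh(\EE_{i,j}[u_{i,j,t}])\ge1$, whence $\sum_i q_iv_i\alpha_i(1-2\epsilon_i)\mu_{i,t}(B)\to0$ term by term. Note finally that both routes only yield the conclusion for $i\in\mathcal{I}_1$, since the factor $1-2\epsilon_i$ vanishes when $\epsilon_i=1/2$; your remark that the remaining agents must be handled indirectly via consensus is apt, and is consistent with how the lemma is actually invoked in Theorems \ref{thm:thmA} and \ref{thm:thmB}.
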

\begin{proof}
	See Appendix D.
\end{proof}

Define the dynamic range (with respect to all agents in the network) of the posterior probability that $X^*$ lies in set $B\subset \mathcal{X}$:
\begin{equation} \label{eq:dyn_range}
	V_t(B) \stackrel{\text{def}}{=} \max_i P_{i,t}(B) - \min_i P_{i,t}(B)
\end{equation}
Also, define the innovation:
\begin{equation*}
	d_{i,t+1}(B) \stackrel{\text{def}}{=} \left[ \int_B \bD_t(x)\bp_t(x) dx \right]_i = \int_B [\bD_t(x)]_{i,i} p_{i,t}(x) dx
\end{equation*}

We next prove a lemma that shows that the dynamic range $V_t(B)$ has a useful upper bound. 
\begin{lemma} \label{lemma:lemmaD}
	Consider Algorithm \ref{alg:alg1}. Let $B=[0,b]$ with $b \leq 1$. Then, for all $R\in \NN$:
	\begin{align}
		V_{t+R}(B) &\leq \tau_1(\bA_{t+R-1}\cdots \bA_{t}) V_t(B) \nonumber \\
		&\quad + \sum_{k=0}^{R-1} \left( \max_i d_{i,t+R-k}(B) - \min_i d_{i,t+R-k}(B) \right) \label{eq:V_bound}
	\end{align}
\end{lemma}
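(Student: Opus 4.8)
The plan is to convert the density recursion (\ref{eq:rand_density_evol}) into a linear recursion for $\bP_t(B)$, unroll it over $R$ steps, and then control everything with the dynamic-range seminorm $\Delta(\bw) \stackrel{\text{def}}{=} \max_i w_i - \min_i w_i$, noting that $V_t(B) = \Delta(\bP_t(B))$.

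First I would integrate (\ref{eq:rand_density_evol}) over $B$ to get $\bP_{t+1}(B) = \bA_t \bP_t(B) + \bd_{t+1}(B)$, where $\bd_{t+1}(B)$ is the vector whose $i$-th entry is $d_{i,t+1}(B)$. Iterating this identity $R$ times yields
$$\bP_{t+R}(B) = \Phi_{t,R}\,\bP_t(B) + \sum_{k=0}^{R-1}\Psi_k\,\bd_{t+R-k}(B),$$
with $\Phi_{t,R} \stackrel{\text{def}}{=} \bA_{t+R-1}\cdots\bA_t$ and $\Psi_k \stackrel{\text{def}}{=} \bA_{t+R-1}\cdots\bA_{t+R-k}$ (and $\Psi_0 = \bI_M$); each of these is a product of stochastic matrices, hence stochastic. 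The only thing requiring care here is getting the index ranges in the unrolled sum right.

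Next I would invoke three elementary properties of $\Delta$: it is subadditive, positively homogeneous, and invariant under adding a multiple of $\b1$ (since $\max$ and $\min$ shift by the same amount). Subadditivity applied to the display above gives $V_{t+R}(B) \le \Delta(\Phi_{t,R}\bP_t(B)) + \sum_{k=0}^{R-1}\Delta(\Psi_k\bd_{t+R-k}(B))$. For the first term, $\Phi_{t,R}$ is stochastic and $\bP_t(B)\in[0,1]^M$ is nonnegative, so Proposition \ref{prop:Amat} (maximizing over the pair $(i,j)$) gives $\Delta(\Phi_{t,R}\bP_t(B)) \le \tau_1(\Phi_{t,R})\,V_t(B) = \tau_1(\bA_{t+R-1}\cdots\bA_t)\,V_t(B)$, which is the first term of (\ref{eq:V_bound}).

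The only genuinely non-routine step is bounding the noise terms, because $\bd_{t+R-k}(B)$ is not sign-definite while Proposition \ref{prop:Amat} needs a nonnegative vector. Here I would shift: set $c_k = \min_i d_{i,t+R-k}(B)$, so $\bd_{t+R-k}(B) - c_k\b1 \ge 0$; since $\Psi_k$ is stochastic, $\Psi_k\b1 = \b1$, so $\Psi_k\bd_{t+R-k}(B)$ and $\Psi_k(\bd_{t+R-k}(B) - c_k\b1)$ differ only by the multiple $c_k\b1$ and therefore have the same $\Delta$. Applying Proposition \ref{prop:Amat} to the shifted nonnegative vector and then using $\tau_1(\Psi_k)\le 1$ gives $\Delta(\Psi_k\bd_{t+R-k}(B)) \le \Delta(\bd_{t+R-k}(B)) = \max_i d_{i,t+R-k}(B) - \min_i d_{i,t+R-k}(B)$. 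Summing over $k$ and combining with the homogeneous term yields exactly (\ref{eq:V_bound}). I expect this shift-by-$\b1$ trick (together with keeping the matrix-product indexing straight) to be the only real content; the remainder is bookkeeping.
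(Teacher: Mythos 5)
Your proposal is correct and follows essentially the same route as the paper: integrate the density recursion over $B$, unroll it over $R$ steps into $\bP_{t+R}(B)=\bA_{t+R-1}\cdots\bA_t\,\bP_t(B)+\sum_k \bA_{t+R-1}\cdots\bA_{t+R-k}\,\bd_{t+R-k}(B)$, apply Proposition \ref{prop:Amat} to the homogeneous term, and bound each noise term's dynamic range by that of $\bd_{t+R-k}(B)$. The only difference is that you justify the non-expansiveness of the stochastic products on the (possibly sign-indefinite) innovation vectors explicitly via the shift-by-$\b1$ trick, a step the paper leaves implicit.
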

\begin{proof}
	See Appendix E.
\end{proof}

To show convergence of the integrated beliefs of all agents in the network to a common limiting belief, it suffices to show $V_t(B) \stackrel{i.p.}{\to} 0$. Theorem \ref{thm:thmA} shows convergence of asymptotic beliefs to a common limiting belief. The structure of the limiting belief is given in Theorem \ref{thm:thmB}.
\begin{theorem} \label{thm:thmA} 
	Consider Algorithm \ref{alg:alg1} and let the Assumptions 1-2 hold. Let $B=[0,b]$, $b\leq 1$. Then, consensus of the agents' beliefs is asymptotically achieved across the network:
	\begin{equation*}
		V_t(B) = \max_{i} P_{i,t}(B) - \min_i P_{i,t}(B) \stackrel{i.p.}{\longrightarrow} 0
	\end{equation*}
	as $t\to\infty$.
\end{theorem}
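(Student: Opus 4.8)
The plan is to use Lemma~\ref{lemma:lemmaD} with $R$ chosen so that the product $\bA_{t+R-1}\cdots \bA_t$ has a contracting coefficient of ergodicity, and then handle the innovation terms via the martingale structure established in Lemmas~\ref{lemma:lemmaA}--\ref{lemma:lemmaC}. Fix $B=[0,b]$. First I would observe that, by Lemma~\ref{lemma:lemmaD}, for any $R\in\NN$,
\begin{equation*}
	V_{t+R}(B) \leq \tau_1(\bA_{t+R-1}\cdots\bA_t)\, V_t(B) + \sum_{k=0}^{R-1}\bigl(\max_i d_{i,t+R-k}(B) - \min_i d_{i,t+R-k}(B)\bigr).
\end{equation*}
The coefficient of ergodicity is submultiplicative, so $\tau_1(\bA_{t+R-1}\cdots\bA_t)\leq 1$ always, but to get strict contraction I need the \emph{expected} behaviour of blocks of $R$ consecutive interaction matrices to mix. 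Since $\bA=\EE[\bA_t]$ is irreducible (Assumption~\ref{assump:strong_connectivity}) and $\underline\alpha>0$, for $R$ large enough there is a positive probability $\rho>0$ that, over any block of $R$ steps, the realized sequence of collaborating pairs produces a product whose coefficient of ergodicity is at most some $\beta<1$; more robustly, one can argue that $\EE[\tau_1(\bA_{t+R-1}\cdots\bA_t)\mid\mathcal F_t]\leq 1-\delta$ for some $\delta>0$ independent of $t$, using that a product of stochastic matrices covering a strongly connected graph with uniformly positive self-weights $a_{i,i}\geq\underline\alpha$ has a uniformly positive entry in some column (a Wolfowitz/ergodicity-coefficient argument). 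This is the main obstacle: unlike the synchronous case, at each step only two agents update, so no \emph{fixed} $R$ makes every realization contract; the contraction must be extracted in expectation (or with positive probability, then Borel--Cantelli), which is exactly where the randomized asynchronous model differs from \cite{OnDecEstActQ:2015}.

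Next I would control the noise sum. By Lemma~\ref{lemma:lemmaA}, $\EE[d_{i,t+1}(B)\mid\mathcal F_t]=0$ componentwise (the expectation of $\int_B\bD_t(x)\bp_t(x)dx$ vanishes), and $|d_{i,t+1}(B)|$ is bounded (by $\alpha_i$ times a bounded likelihood-ratio-minus-one factor times a probability, hence uniformly bounded by a constant). Thus each term $\max_i d_{i,t+R-k}(B)-\min_i d_{i,t+R-k}(B)$ is a bounded, mean-controlled increment. The key quantitative input is Lemmas~\ref{lemma:lemmaC} and \ref{lemma:lemmaX}: $\Lambda_t(B)\to 1$ a.s.\ forces $\mu_{i,t}(B)=\min\{P_{i,t}(B),1-P_{i,t}(B)\}\to 0$ a.s.\ for every $i$. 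Since the magnitude of the innovation $d_{i,t+1}(B)$ is proportional to $\mu_{i,t}(B)$ (the Bayesian update only moves mass when the query region is nontrivial, i.e.\ when $P_{i,t}(B)$ is bounded away from $0$ and $1$), this gives $\EE[(\max_i d_{i,t+R-k}(B)-\min_i d_{i,t+R-k}(B))^2]\to 0$, so the noise contribution over each block vanishes in probability (and in mean square) as $t\to\infty$.

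Finally I would combine the two pieces. Taking expectations conditionally and iterating the block recursion: writing $t=nR$ and $v_n=\EE[V_{nR}(B)]$, we get $v_{n+1}\leq(1-\delta)v_n+\eta_n$ where $\eta_n\to 0$ collects the block noise bound from the previous paragraph. A standard lemma on recursions of the form $v_{n+1}\leq(1-\delta)v_n+\eta_n$ with $\delta\in(0,1)$ and $\eta_n\to 0$ yields $v_n\to 0$; since $V_t(B)$ changes by a bounded amount between block endpoints and $\mu_{i,t}\to 0$ also controls the within-block drift, $\EE[V_t(B)]\to 0$ along the full sequence, hence $V_t(B)\stackrel{i.p.}{\to}0$ by Markov's inequality. (Alternatively, if the a.s.\ contraction-with-positive-probability route is taken, one shows the product $\prod$ of block ergodicity coefficients $\to 0$ a.s.\ by Borel--Cantelli / second Borel--Cantelli on independent-enough blocks, and pairs it with the a.s.\ decay of $\mu_{i,t}$ to get $V_t(B)\to 0$ a.s., which is stronger than required.) The delicate bookkeeping is ensuring the block length $R$ can be chosen uniformly and that the noise bound is genuinely $o(1)$ rather than merely summable — both hinge on the a.s.\ convergence $\mu_{i,t}(B)\to 0$ supplied by Lemmas~\ref{lemma:lemmaC}--\ref{lemma:lemmaX}.
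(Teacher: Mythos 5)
Your proposal is correct and follows the same skeleton as the paper's proof: the block recursion of Lemma~\ref{lemma:lemmaD}, a strict contraction of the ergodicity coefficient over blocks of length $R$, control of the innovation terms through $\mu_{i,t}(B)\to 0$ (Lemmas~\ref{lemma:lemmaC}--\ref{lemma:lemmaX}), the scalar recursion $v_{n+1}\le(1-\delta)v_n+\eta_n$ followed by an extension from the subsequence $t=kR$ to all $t$, and Markov's inequality. Two sub-steps differ, and both differences are worth noting. First, for the contraction the paper asserts $\EE[\tau_1(\bA_{t+R-1}\cdots\bA_t)\mid\mathcal F_t]\le 1-\epsilon$ by appealing to Assumption~\ref{assump:bounded_times} --- even though the theorem statement (and the remark preceding Assumption~\ref{assump:bounded_times}) says only Assumptions 1--2 are needed for consensus. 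Your Wolfowitz-style argument --- extracting a scrambling product with positive probability over a long enough block from the i.i.d.\ pair selection and the irreducibility of $\bA=\EE[\bA_t]$, hence an expected ergodicity coefficient bounded away from $1$ uniformly in $t$ --- delivers the contraction from Assumption~\ref{assump:strong_connectivity} alone, and is therefore the argument the theorem as stated actually requires; you are also right that no fixed realization-wise contraction is available in the asynchronous model. Second, for the noise term the paper routes through the smooth max/min approximation of Proposition~\ref{prop:lse} and the conditional moment generating functions of $d_{l,t+1}(B)$, arriving at the bound $4\max_i\{\alpha_i(1-2\epsilon_i)\mu_{i,t}(B)\}$ after letting $\gamma\to\infty$; your observation that $|d_{l,t+1}(B)|$ is deterministically bounded by a constant times $\alpha_i(1-2\epsilon_i)\mu_{i,t}(B)$ (which the case analysis in the proof of Lemma~\ref{lemma:lemmaC} confirms) reaches the same conclusion more directly. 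One phrase to tighten: the martingale-difference property from Lemma~\ref{lemma:lemmaA} (``mean-controlled increment'') is not what makes the noise sum vanish --- a max minus min of conditionally mean-zero terms need not be small in expectation; what closes the argument is precisely the pointwise proportionality to $\mu_{i,t}(B)$ combined with its almost-sure decay, which you do invoke.
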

\begin{proof}
	See Appendix F.
\end{proof}

Theorem \ref{thm:thmA} establishes that Algorithm 1 produces belief functions that become identical over all agents. This establishes asymptotic consensus among the beliefs, i.e., that as time goes on all agents come to agreement about the uncertainty in the target state. It remains to show the limiting belief is in fact concentrated at the true target state $X^*$ (Thm. \ref{thm:thmB}).
\begin{lemma} \label{lemma:lemmaE}
	Consider Algorithm \ref{alg:alg1}. Assume $p_{i,0}(X^*)>0, \forall i \in \mathcal{N}$. Then, the posteriors evaluated at the true target state $X^*$ have the following asymptotic behavior:
	\begin{equation*}
		\liminf_{m\to\infty} \frac{1}{m} \sum_{i=1}^M c_i \log(p_{i,m R}(X^*)) \geq K \quad (a.s.)
	\end{equation*}
	for some $c_i>0$.
\end{lemma}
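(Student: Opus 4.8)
The plan is to track the evolution of the posterior densities evaluated at the true state $X^*$ through the update rule (\ref{eq:randomized_density_update}), and to extract a uniform-in-time lower bound on a weighted geometric mean of these values. First I would observe that, by Lemma \ref{lemma:lemmaA}, the normalizing constant $\mathcal{Z}_{i,t}(y) = 1/2$, so the Bayesian factor in (\ref{eq:randomized_density_update}) is $2 l_i(y_{i,t+1}|x,\hat X_{i,t})$; evaluated at $x = X^*$ this is $2 f_1^{(i)}(y_{i,t+1})$ if $X^* \in A_{i,t}$ and $2 f_0^{(i)}(y_{i,t+1})$ otherwise, i.e.\ it equals either $2(1-\epsilon_i) \geq 1$ or $2\epsilon_i$. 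Therefore, when agent $i$ updates and collaborates with $j$,
\begin{equation*}
	p_{i,t+1}(X^*) = p_{j,t+1}(X^*) \geq \alpha_i p_{i,t}(X^*) \cdot 2 f_{z}^{(i)}(y_{i,t+1}) + (1-\alpha_i) p_{j,t}(X^*),
\end{equation*}
and all other agents' values are unchanged. The key point is that the ``bad'' event $2 f_0^{(i)}(y_{i,t+1}) = 2\epsilon_i < 1$ (answer inconsistent with $X^*$) occurs with probability exactly $\epsilon_i$, so in expectation (or along a typical sample path, by a law-of-large-numbers / ergodic argument on the i.i.d.\ agent-selection and BSC noise) the multiplicative contribution is controlled.

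The second step is to pass to a log-linear recursion. Setting $L_{i,t} = \log p_{i,t}(X^*)$ and using concavity of $\log$ on the convex combination above (and $\log$ of the $2f_z^{(i)}$ factor being $\geq \log(2\epsilon_i)$ always), one gets a lower bound of the form $L_{i,t+1}, L_{j,t+1} \geq \alpha_i L_{i,t} + (1-\alpha_i) L_{j,t} + \log(2 f_z^{(i)}(y_{i,t+1}))$, which is a noisy averaging consensus recursion on the vector $\bL_t$. Averaging over the $R$-block of Assumption \ref{assump:bounded_times} and using the doubly-stochastic-like structure together with the positive left eigenvector $\bv$ (or weights $c_i > 0$) of $\bA = \EE[\bA_t]$ as the natural Lyapunov weighting, the averaging part telescopes and one is left with $\frac1m \sum_i c_i L_{i,mR}$ bounded below by a Cesàro average of the innovation terms $\sum_i c_i \log(2 f_{z_i}^{(i)})$ accumulated over the first $m$ blocks. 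By the strong law of large numbers applied to the i.i.d.\ (agent, response) draws, this Cesàro average converges a.s.\ to a finite constant
\begin{equation*}
	K = \sum_{i} c_i q_i \bigl[(1-\epsilon_i)\log(2(1-\epsilon_i)) + \epsilon_i \log(2\epsilon_i)\bigr] = \sum_i c_i q_i\bigl(\log 2 - H(\epsilon_i)\bigr),
\end{equation*}
which (choosing the weights appropriately) is finite; this yields the claimed $\liminf \geq K$.

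The main obstacle I expect is handling the \emph{coupling between agents through the averaging step} cleanly: unlike the innovation term, which is i.i.d.\ and only touches the chosen agent $i$, the averaging mixes $L_{i,t}$ and $L_{j,t}$ and its effect over a block is a random product of matrices $\bA_{t+R-1}\cdots\bA_t$, not a fixed doubly stochastic matrix. The trick is that we do \emph{not} need the averaging to be contractive here (that was Theorem \ref{thm:thmA}'s job); we only need a conserved or slowly-varying linear functional, and the left eigenvector $\bv$ of $\bA = \EE[\bA_t]$ plays that role in expectation, so a martingale/smoothing argument (in the same spirit as Lemmas \ref{lemma:lemmaB}--\ref{lemma:lemmaC}) controls the cross terms. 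A secondary subtlety is that the $2f_z^{(i)}$ factor can be very negative when $\epsilon_i$ is tiny, so one must verify $\log(2\epsilon_i)$ is integrable (it is, being a fixed finite number for each $i$) and that the finitely many completely-unreliable agents $i \notin \mathcal{I}_1$ (where $2f_z^{(i)} \in \{1,1\}$ actually gives $\log 2 - H(1/2) = 0$) contribute nonnegatively, so they do not drag $K$ down.
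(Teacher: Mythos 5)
Your skeleton matches the paper's proof almost exactly: evaluate the update (\ref{eq:randomized_density_update}) at $x=X^*$ using $\mathcal{Z}_{i,t}(y)=1/2$, use concavity of $\log$ to linearize the pairwise-averaging recursion into $\log \bp_{t+1}(X^*) \succeq \bA_t \log\bp_t(X^*) + (\text{innovation})(\be_{i_t}+\be_{j_t})$, accumulate the i.i.d.\ innovations $\log(2P(Y_t|Z_t))$, and apply the strong law of large numbers; your identification of the limit with $\log 2 - H(\epsilon_i) = C(\epsilon_i)$ and your observation that agents with $\epsilon_i=1/2$ contribute exactly zero are both exactly what the paper uses.

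The one place you diverge is precisely the step you flag as ``the main obstacle,'' and there your proposed fix is not actually a proof. You weight by the left eigenvector $\bv$ of $\bA=\EE[\bA_t]$ and appeal to a martingale/smoothing argument to control $\bv^T(\bA_t-\bA)\log\bp_t(X^*)$. The difficulty is that these martingale differences are not bounded: $\log p_{i,t}(X^*)$ is only controlled as $O(t)$ (each step multiplies by a factor in $[2\underline{\alpha}\epsilon_{\min}, 2]$ before averaging), so the increments can grow linearly in $t$, the condition $\sum_t \EE[\xi_t^2]/t^2 < \infty$ for the martingale SLLN fails, and normalizing by $1/m$ does not obviously annihilate the accumulated cross terms. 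Lemmas \ref{lemma:lemmaB}--\ref{lemma:lemmaC} do not rescue you here because they concern the bounded quantities $\bP_t(B)$, not log-densities at a point. The paper avoids this entirely: it works over $R$-step blocks, takes $\bc$ to be the left eigenvector of the block product $\bPhi_{t:t+R-1}$ (invoking Assumption \ref{assump:bounded_times}) so that the averaging term telescopes \emph{deterministically} via $\bc^T\bPhi = \bc^T$, and handles the innovation coefficients with the crude uniform bound $\bc^T\bPhi_{t+R+1-k:t+R-1}\be_i \geq \underline{c}\cdot\alpha_R$, at the price of a smaller (but still positive) constant $K = \underline{\alpha}\,\underline{c}\,\alpha_R\, C(\max_{i\in\mathcal{I}_1}\epsilon_i)$ rather than your exact $\sum_i c_i q_i(\log 2 - H(\epsilon_i))$. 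To complete your argument you would either need to adopt that block/eigenvector-of-the-product device, or supply a genuinely new bound showing the cross terms are $o(m)$ almost surely.
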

\begin{proof}
	See Appendix G.
\end{proof}

Now, we are ready to prove the main consistency result of the asymptotic beliefs. The proof is based on the consensus result of Theorem \ref{thm:thmA}.
\begin{theorem} \label{thm:thmB} 
	Consider Algorithm \ref{alg:alg1} and let Assumptions 1-3 hold. Let $B=[0,b]$, $b\leq 1$. Then, we have for each $i\in \mathcal{N}$:
		\begin{equation*}
			F_{i,t}(b) = P_{i,t}(B) \stackrel{i.p.}{\longrightarrow} F_{\infty}(b) =\left\{ \begin{array}{ll} 0, & b<X^* \\ 1, & b>X^* \end{array} \right.
		\end{equation*}
		as $t\to\infty$. In addition, for all $i\in \mathcal{N}$:
		\begin{equation} \label{eq:consistency}
			\check{X}_{i,t} \stackrel{\text{def}}{=} \int_{x=0}^1 x p_{i,t}(x) dx \stackrel{i.p.}{\longrightarrow} X^*
		\end{equation}
\end{theorem}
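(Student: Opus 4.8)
The plan is to (a) show that the CDF's $F_{i,t}(b)=P_{i,t}(B)$, $B=[0,b]$, converge in probability to a \emph{common} random limit $F_\infty(b)$; (b) identify $F_\infty$ as a Dirac point mass $\delta_{X_\infty}$; (c) show $X_\infty=X^*$ a.s.; and (d) deduce convergence of the posterior means by a layer-cake identity.

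\emph{Steps (a)--(b): the limit is a $\{0,1\}$-valued step function.} Fix $b\le 1$. By Lemma~\ref{lemma:lemmaB}, $\lim_t \bv^T\bP_t(B)=\lim_t\sum_\ell v_\ell P_{\ell,t}(B)$ exists a.s.; call it $S_\infty(b)$. By Theorem~\ref{thm:thmA}, $V_t(B)\stackrel{i.p.}{\longrightarrow}0$, and since $\bv\succ 0$ we have $|\bv^T\bP_t(B)-(\b1^T\bv)P_{i,t}(B)|\le (\b1^T\bv) V_t(B)$ for every $i$; hence $P_{i,t}(B)\stackrel{i.p.}{\longrightarrow}F_\infty(b):=S_\infty(b)/(\b1^T\bv)$, the same limit for all agents. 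By Lemma~\ref{lemma:lemmaC} together with Lemma~\ref{lemma:lemmaX}, $\mu_{i,t}(B)=\min\{P_{i,t}(B),1-P_{i,t}(B)\}\to 0$ a.s.; passing to an a.s.\ convergent subsequence forces $F_\infty(b)\in\{0,1\}$ a.s. Running this simultaneously over rational $b$ and invoking monotonicity and right-continuity of each $F_{i,t}$, $F_\infty(\cdot)$ is a.s.\ a nondecreasing, right-continuous, $\{0,1\}$-valued function, so $F_\infty(b)=I(b\ge X_\infty)$ for some random $X_\infty\in[0,1]$; equivalently the common limiting belief is $\delta_{X_\infty}$, and $P_{i,t}([0,b])\stackrel{i.p.}{\longrightarrow}I(b\ge X_\infty)$ for every $b\neq X_\infty$.

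\emph{Step (c): $X_\infty=X^*$ a.s.} Suppose $\PP(X_\infty\neq X^*)>0$; by symmetry assume $\PP(X_\infty<X^*)>0$ and work on that event. For $b$ with $X_\infty<b<X^*$ we get $P_{i,t}([0,b])\to 1>1/2$, so for all large $t$ the bisection point $\hat X_{i,t}=F_{i,t}^{-1}(1/2)$ lies to the left of $X^*$, and no query breakpoint ever re-enters a fixed neighbourhood of $X^*$. The intuitive contradiction is that, on the predominantly correct responses, the likelihood ratio attached to the neighbourhood of $X_\infty$ is $2\epsilon_i<1$, eroding that mass geometrically on average --- incompatible with $P_{i,t}(\text{nbhd of }X_\infty)\to 1$. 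The rigorous route uses the machinery already assembled: since $p_{i,t}$ is never sharpened near $X^*$ after some time, $p_{i,t}(X^*)$ is comparable, up to time-independent constants (this is where the averaging / convex-combination structure of (\ref{eq:randomized_density_update}) is handled), to the vanishing mass $P_{i,t}(\text{nbhd of }X^*)$, so $p_{i,t}(X^*)\to 0$; but Lemma~\ref{lemma:lemmaE} gives $\liminf_m \tfrac1m\sum_i c_i\log p_{i,mR}(X^*)\ge K>0$, forcing $\max_i p_{i,mR}(X^*)\to\infty$ along a subsequence --- a contradiction. Hence $X_\infty=X^*$ a.s., which is precisely the asserted form of $F_\infty$.

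\emph{Step (d) and the main obstacle.} Write $\check X_{i,t}=\int_0^1 x\,p_{i,t}(x)\,dx=\int_0^1 (1-F_{i,t}(x))\,dx$. By Steps (a)--(c), $F_{i,t}(x)\stackrel{i.p.}{\longrightarrow}I(x\ge X^*)$ for a.e.\ $x$, and $0\le 1-F_{i,t}(x)\le 1$, so bounded convergence (applied along a.s.\ subsequences and then upgraded to convergence in probability via the subsequence criterion) yields $\check X_{i,t}\stackrel{i.p.}{\longrightarrow}\int_0^1 I(x<X^*)\,dx=X^*$. The crux of the whole argument is Step (c): because all convergences are only in probability, one must descend to almost-sure subsequences to run any pointwise argument and then climb back, and the principle ``mass placed away from $X^*$ is self-erasing'' must be made quantitative --- this is exactly the content of Lemma~\ref{lemma:lemmaE}, whose strictly positive drift constant $K$ traces back to $\epsilon_i<1/2$ for at least one agent (Assumption~\ref{assump:BSC}) together with Assumption~\ref{assump:bounded_times}, which guarantees that the network mixes over each window of $R$ steps so that the $R$-step recursion of Lemma~\ref{lemma:lemmaD} and the Cesàro averages in Lemma~\ref{lemma:lemmaE} are well behaved.
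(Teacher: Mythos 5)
Your proposal is correct and follows essentially the same route as the paper: consensus from Theorem~\ref{thm:thmA} yields a common limit, Lemmas~\ref{lemma:lemmaC}--\ref{lemma:lemmaX} force it to be $\{0,1\}$-valued, Lemma~\ref{lemma:lemmaE}'s blow-up of $p_{i,mR}(X^*)$ pins the jump at $X^*$, and your step (d) merely makes explicit the layer-cake argument the paper delegates to Theorem 2 of \cite{OnDecEstActQ:2015}. The only difference is presentational: you identify the jump point by contradiction (mass concentrating away from $X^*$ would force $p_{i,t}(X^*)\to 0$ via a density-ratio comparison you sketch but do not fully carry out), whereas the paper argues directly from $p_{i_0,mR}(X^*)\to\infty$ combined with $\mu_{i_0,t}(\cdot)\to 0$ and monotonicity of the CDF.
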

\begin{proof}
	Theorem 1 implies that for each agent $i$,
	\begin{equation} \label{eq:cdf_asymp}
		F_{i,t}(b) \stackrel{i.p.}{\to} F_{\infty}(b)
	\end{equation}
	as $t\to\infty$, where $F_{\infty}(b)$ is a common limiting random variable. To finish the proof, we show $F_{\infty}(b)$ is equal to the constant $I(b>X^*)$. Lemma \ref{lemma:lemmaE} implies that $\sum_{i=1}^M c_i \log p_{i,mR}(X^*) \to \infty$ almost surely as $m\to\infty$. Thus, there exists an agent $i_0 \in \mathcal{I}_1$ such that $p_{i_0,mR}(X^*) \to\infty$. Lemma \ref{lemma:lemmaX} implies $\mu_{i_0,t}(b')=\min\{F_{i_0,t}(b'),1-F_{i_0,t}(b')\} \stackrel{a.s.}{\to} 0$ for any $b'\in [0,1]$. This asymptotic result, combined with the monotonicity of the CDF operator $F_{i_0,t}(\cdot)$ and $p_{i_0,mR}(X^*)\stackrel{a.s.}{\to} \infty$ imply $F_{i_0,mR}(b)\to I(b>X^*)$. It then follows from (\ref{eq:cdf_asymp}) that $F_{i,t}(b) \stackrel{i.p.}{\to} F_{\infty}(b)=I(b>X^*)$ for all $i\in \mathcal{N}$. The second part (i.e., (\ref{eq:consistency})) of the proof is identical to the one of Theorem 2 in \cite{OnDecEstActQ:2015}.
\end{proof}

\section{Experimental Validation} \label{sec:simulations}

This section presents simulations that validate the theory in Section \ref{sec:convergence} and demonstrate the benefits of the proposed asynchronous decentralized 20 questions algorithm. We compare the performance of the asynchronous algorithm proposed in this paper with the synchronous counterpart studied in \cite{OnDecEstActQ:2015}, the centralized fully Bayesian estimator implemented via the basic equivalence principle derived in \cite{Tsiligkaridis:TIT:2014} and the standard query-based estimator with no information sharing. The root mean-squared error (RMSE) was chosen as a performance metric and we consider the performance for strongly connected geometric random graphs \cite{GuptaKumar:2000} to model ad-hoc wireless network topologies. An example geometric random graph is shown in Fig. \ref{fig:grg_sample}.
\begin{figure}[ht]
	\centering
		\includegraphics[width=0.40\textwidth]{./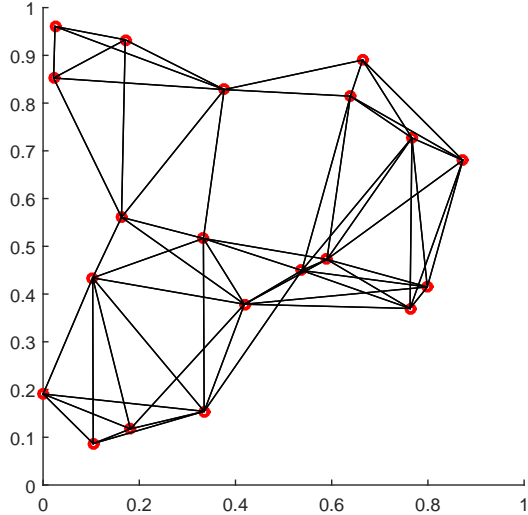}
		\caption{ Geometric random graph with $M=20$ nodes over unit square. This topology defines the zero pattern of the stochastic interaction matrix $\bP$. The nonzero probabilities $P_{i,j}$ for each row/node $i$ are chosen to be uniform, i.e., each agent is equally likely to communicated with any of its neighbors. }	
	\label{fig:grg_sample}
\end{figure}

To ensure fairness in the comparisons, each algorithm iteration consists of $M$ queries. Thus, the performance at each iteration of the synchronous setup (in which all agents are queried about the target location) should be compared with the performance at every $M$ iterations of the asynchronous setup. This is denoted as ``effective iteration'' in the figures. We average performance over $T = 200$ Monte Carlo trials for each random graph realization. Performance is further averaged over $10$ geometric random graphs to obtain the ensemble-average RMSE's. The average and worst-case RMSE metrics were calculated as:
\begin{align*}
	\text{RMSE}_{\text{avg}} &= \sqrt{\frac{1}{T} \sum_{t=1}^T \frac{1}{M} \sum_{i=1}^M (\hat{X}_{i,t}-X^*)^2 } \\
	\text{RMSE}_{\text{max}} &= \sqrt{\frac{1}{T} \sum_{t=1}^T \max_i (\hat{X}_{i,t}-X^*)^2 }
\end{align*}

Figures \ref{fig:rmse:avg:homogeneous} and \ref{fig:rmse:max:homogeneous} show the average and worst-case RMSE performance of the network as a function of iteration for the case of $M=20$ homogeneous agents, i.e., all agents are unreliable with error probability $\epsilon=0.45$. We observe that the asynchronous and synchronous algorithms both uniformly outperform the case of no information sharing over all iterations. We remark that the asynchronous algorithm seems to have a slower asymptotic rate of convergence as compared to the synchronous algorithm, while it seems to improve the RMSE more than the synchronous counterpart for the first few iterations. This may be due to biasing effects that occur in the synchronous algorithm; i.e., the belief is perturbed by multiple unreliable neighbors at each update step. In the asynchronous algorithm on the other hand, there are less perturbations in the initial learning stage since it consists of pairwise belief averaging at each update step.
\begin{figure}[ht]
	\centering
		\includegraphics[width=0.45\textwidth]{./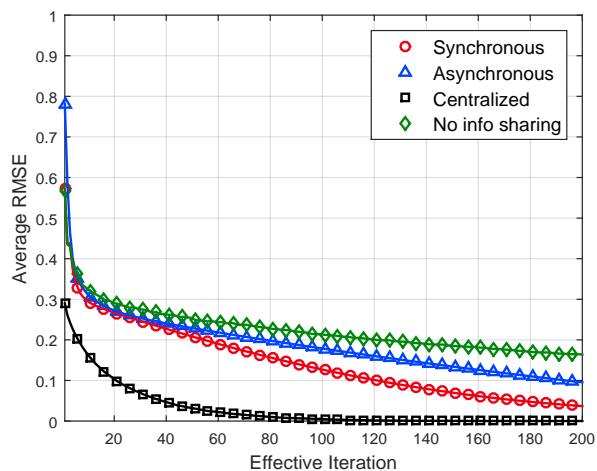}
		\caption{ Average RMSE performance across the network. Homogeneous network of $M=20$ unreliable agents with error probability $\epsilon=0.45$. The average RMSE is lower for the case of information sharing vs. the case of no information sharing.  }	
	\label{fig:rmse:avg:homogeneous}
\end{figure}
\begin{figure}[ht]
	\centering
		\includegraphics[width=0.45\textwidth]{./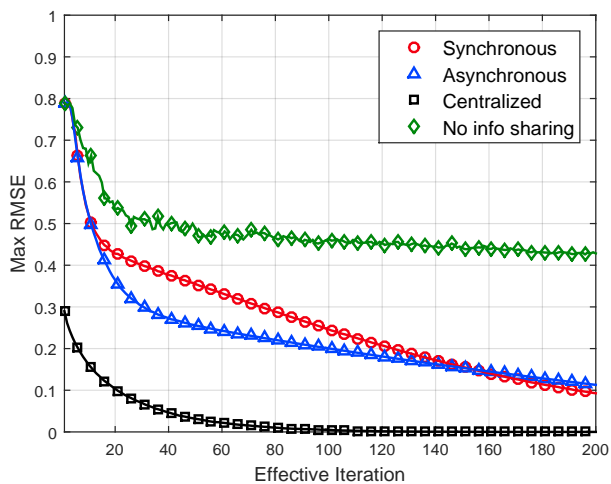}
		\caption{ Worst-case RMSE performance across the network. Homogeneous network of $M=20$ unreliable agents with error probability $\epsilon=0.45$. The worst-case MSE across the network is lower for the case of information sharing vs. the case of no information sharing. The asynchronous algorithm outperforms the synchronous algorithm in the initial learning stage. }	
	\label{fig:rmse:max:homogeneous}
\end{figure}

Figures \ref{fig:rmse:avg:heterogeneous} and \ref{fig:rmse:max:heterogeneous} show the average and worst-case RMSE performance of the network for the case of $M=20$ heterogeneous agents, i.e., three agents are reliable with error probability $\epsilon=0.05$ and the rest are unreliable with error probability $\epsilon=0.45$. Here, the reliable agents speed up the convergence of the unreliable agents through belief averaging. We observe the interesting result that the asynchronous algorithm uniformly outperforms the synchronous algorithm over all iterations, both in terms of average and worst-case RMSE. This can be attributed to the fact that one-way updating and pairwise averaging induce less bias when combining beliefs at each node at each update step, effectively spreading the good information around the network in a simplified fashion without being influenced too much by multiple neighbors. Thus, asynchronous implementations of the non-Bayesian decentralized 20 questions algorithm have the potential to improve network-wide estimation performance and getting closer to the centralized Bayesian performance, in addition to requiring significantly less infrastructure and computational complexity, in comparison to synchronous implementations.
\begin{figure}[ht]
	\centering
		\includegraphics[width=0.45\textwidth]{./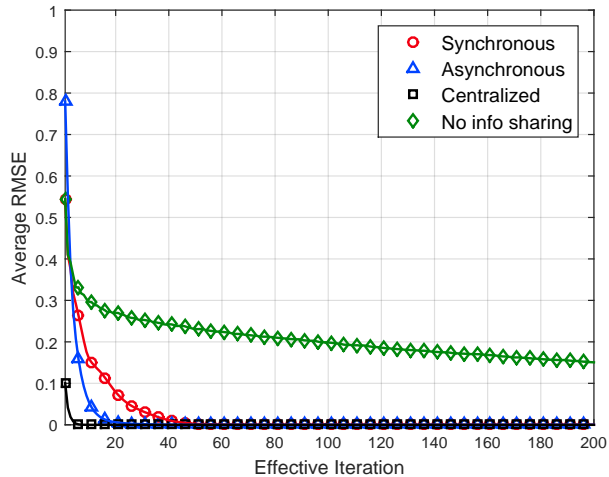}
		\caption{ Average RMSE performance across the network. Heterogeneous network of $M=20$ agents, three of which are reliable with error probability $\epsilon=0.05$ and the remaining ones are unreliable with error probability probability $\epsilon=0.45$. The asynchronous algorithm outperforms the synchronous decentralized estimation algorithm with information sharing, and the algorithm with no information sharing.  }
	\label{fig:rmse:avg:heterogeneous}
\end{figure}
\begin{figure}[ht]
	\centering
		\includegraphics[width=0.45\textwidth]{./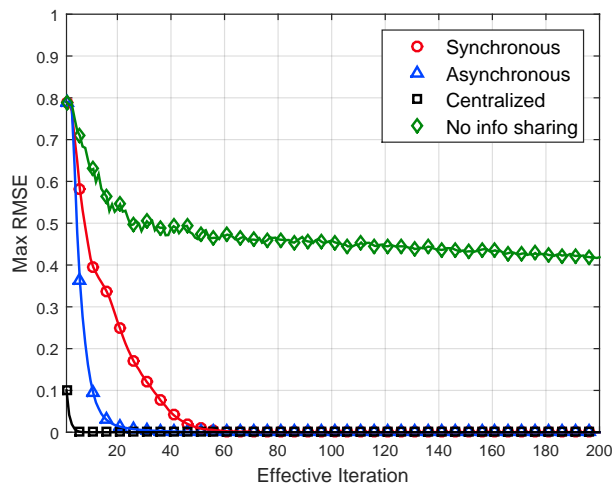}
		\caption{ Worst-case RMSE performance across the network. Heterogeneous network of $M=20$ agents, three of which are reliable with error probability $\epsilon=0.05$ and the remaining ones are unreliable with error probability probability $\epsilon=0.45$. The asynchronous algorithm outperforms the synchronous decentralized estimation algorithm with information sharing, and the algorithm with no information sharing. }	
	\label{fig:rmse:max:heterogeneous}
\end{figure}

\section{Conclusion} \label{sec:conclusions}
We introduced an asynchronous version of decentralized 20 questions with noise based on one-way updating and pairwise belief averaging, and analyzed its convergence properties. We also illustrated several benefits of information sharing as compared to no information sharing, and asynchronous vs. synchronous implementations. Asymptotic convergence properties of the agents' beliefs were derived, showing that they reach consensus to the true belief. Numerical experiments were presented to validate the convergence properties of the algorithm.

\newpage

\appendices

\section{Proof of Lemma \ref{lemma:lemmaA}}
\begin{proof}
	Note the following:
	\begin{align*}
		\EE &\left[ \int_B \bD_t(x) \bp_t(x) dx \Big| \mathcal{F}_t \right] \\
			&= \EE_{i,j}\left[ \EE\left[ \int_B \bD_{i\to j}(x) \bp_t(x) dx \Big| \mathcal{F}_t, (i,j) \right] \right]
	\end{align*}
	Decomposing the inner term, we obtain:
	\begin{align*}
		&\EE\left[ \int_B \bD_{i\to j}(x) \bp_t(x) dx \Big| \mathcal{F}_t, (i,j) \right] \\
			&= \EE\left[ \int_B \alpha_i \left( \frac{l_i(y_{i,t+1}|x,A_{i,t})}{\mathcal{Z}_{i,t}(y_{i,t+1})}-1 \right) (\be_i+\be_j) p_{i,t}(x) dx \Big| \mathcal{F}_t, (i,j) \right] \\
			&= \alpha_i (\be_i+\be_j) \EE\left[ \int_B (2l_i(y_{i,t+1}|x,A_{i,t}) - 1) p_{i,t}(x) dx \Big| \mathcal{F}_t,(i,j) \right]
	\end{align*}
	where we used the fact that $\mathcal{Z}_{i,t}(y)=1/2$ for all $y\in \mathcal{Y}$. This follows from the probabilistic bisection property:
	\begin{align*}
		&\mathcal{Z}_{i,t}(y) \\
			&= \int_{\mathcal{X}} p_{i,t}(x) \left( f_1^{(i)}(y) I(x\in A_{i,t}) + f_0^{(i)}(y) I(x\notin A_{i,t}) \right) dx \\
			&= f_1^{(i)}(y) P_{i,t}(A_{i,t}) + f_0^{(i)}(y) (1-P_{i,t}(A_{i,t})) = 1/2
	\end{align*}
	where we used the fact $f_1^{(i)}(y) + f_0^{(i)}(y) = 1$. From the definition of $l_i(y|x,A_{i,t})$, it follows that:
	\begin{align*}
		 &\int_B (2l_i(y_{i,t+1}|x,A_{i,t}) - 1) p_{i,t}(x) dx \\
				&= 2 \left( f_1^{(i)}(y_{i,t+1}) \PP_{i,t}(B\cap A_{i,t}) + f_0^{(i)}(y_{i,t+1}) \PP_{i,t}(B\cap A_{i,t}^c) \right) \\
				&\quad - \PP_{i,t}(B)
	\end{align*}
	Taking the conditional expectation, and using $\EE[ f_1^{(i)}(y_{i,t+1})]=1/2$, we obtain:
	\begin{equation*}
		 \EE\left[ \int_B (2l_i(y_{i,t+1}|x,A_{i,t}) - 1) p_{i,t}(x) dx \Big| \mathcal{F}_t,(i,j) \right] = 0
	\end{equation*}
	This concludes the lemma.
\end{proof}

\section{Proof of Lemma \ref{lemma:lemmaB}}
\begin{proof}
	From the tower property of conditional expectation, we obtain:
	\begin{align}
		&\EE[ \bP_{t+1}(B) |\mathcal{F}_t] \nonumber \\
				&= \EE\left[ \int_B (\bA_t+\bD_t(x)) \bp_t(x) dx \Big| \mathcal{F}_t \right] \nonumber \\
				&= \EE_{i,j}\left[  \EE\left[ \int_B (\bA_t+\bD_t(x)) \bp_t(x) dx \Big| \mathcal{F}_t, (i,j)\right] \Big| \mathcal{F}_t\right] \nonumber \\
				&= \sum_{i,j=1}^M q_i P_{i,j} \EE\left[ \bA_{i\to j} \bP_t(B) + \int_B \bD_{i\to j}(x) \bp_t(x) dx \Big| \mathcal{F}_t,(i,j) \right] \nonumber \\
				&= \left(\sum_{i,j=1}^M q_i P_{i,j} \bA_{i\to j} \right) \bP_t(B) + \EE\left[ \int_B \bD_t(x) \bp_t(x) dx \Big| \mathcal{F}_t \right] \nonumber \\
				&= \bA \bP_t(B)    \label{eq:lemmaB_relation}
	\end{align}
	where we used the result of Lemma \ref{lemma:lemmaA}. From strong connectivity (i.e., Assumption \ref{assump:strong_connectivity}), it follows that $\bA$ is an irreducible stochastic matrix. Thus, there exists a left eigenvector $\bv \in \RR^M$ with strictly positive entries corresponding to a unit eigenvalue-i.e., $\bv^T=\bv^T\bA$ \cite{Berman:1979}. Multiplying both sides of (\ref{eq:lemmaB_relation}) from the left by $\bv^T$, we obtain $\EE[\bv^T\bP_{t+1}(B)|\mathcal{F}_t] = \bv^T\bP_t(B)$. Thus, the process $\{\bv^T\bP_t(B):t\geq 0\}$ is a martingale with respect to the filtration $\mathcal{F}_t$. We note that it is bounded below by zero and above by $\nn \bv \nn_1$ almost surely. From the martingale convergence theorem \cite{Billingsley:2012}, it follows that it converges almost surely.
\end{proof}

\section{Proof of Lemma \ref{lemma:lemmaC}}
\begin{proof}
	Define the tilted measure variable $\zeta_{t}(B)\stackrel{\text{def}}{=}\exp(\bv^T\bP_t(B))$. From Lemma \ref{lemma:lemmaB} and Jensen's inequality, it follows that
	\begin{equation*}
		\EE[\zeta_{t+1}(B)|\mathcal{F}_t] \geq \zeta_t(B)
	\end{equation*}
	so the process $\{\zeta_t(B):t\geq 0\}$ is a submartingale with respect to the filtration $\mathcal{F}_t$. From the proof of Lemma \ref{lemma:lemmaB}, it follows that $\zeta_t(B)$ is bounded a.s., so by the martingale convergence theorem \cite{Billingsley:2012}, it follows that $\lim_{t\to\infty} \zeta_t(B)$ exists and is finite almost surely. Define the ratio $\beta_{t+1}(B)\stackrel{\text{def}}{=}\frac{\zeta_{t+1}(B)}{\zeta_t(B)}$. As a result, we have from Lemma \ref{lemma:lemmaB}:
	\begin{align*}
		&\lim_{t\to\infty} \beta_{t+1}(B) \\
		&= \lim_{t\to\infty} \frac{e^{\bv^T\bA_t\bP_t(B)} \exp\left( \bv^T \int_B \bD_t(x)\bp_t(x) dx \right)}{e^{\bv^T\bP_t(B)}} \stackrel{a.s.}{=} 1
	\end{align*}
	Since the variables in the limit on the LHS are bounded a.s., i.e., $\left|\beta_{t+1}(B)\right|\leq e^{\nn\bv \nn_1}$, the dominated convergence theorem for conditional expectations \cite{Durrett:2005} implies:
	\begin{equation} \label{eq:eq2}
		\EE\left[ \beta_{t+1}(B) \Bigg| \mathcal{F}_t \right] \stackrel{a.s.}{\longrightarrow} 1
	\end{equation}
	as $t\to\infty$. The conditional expectation can be expanded as:
	\begin{align}
		&\EE[\beta_{t+1}(B)|\mathcal{F}_t] \nonumber \\
			&= \frac{1}{e^{\bv^T\bP_t(B)}} \EE\left[ e^{\bv^T\bA_t\bP_t(B)} e^{\bv^T\int_B \bD_t(x)\bp_t(x) dx} | \mathcal{F}_t \right] \nonumber \\
			&= \frac{1}{e^{\bv^T\bP_t(B)}} \sum_{i,j} q_i P_{i,j} \nonumber \\
			&\quad \times e^{\bv^T\bA_{i\to j}\bP_t(B)} \EE[e^{\bv^T\int_B \bD_t(x)\bp_t(x) dx} | \mathcal{F}_t, (i,j)] \label{eq:eq3}
	\end{align}
	Next, we analyze the ratio of exponentials for two separate cases. First, consider the case $P_{i,t}([0,b])=\int_{0}^b p_{i,t}(x) dx\leq 1/2$. Using the definition of $\hat{X}_{i,t}$, it follows that $b\leq \hat{X}_{i,t}$. This implies that $l_i(y|x,A_{i,t})=f_1^{(i)}(y)$ for all $x\leq b$. Using this fact and $P(Y_{i,t+1}=y|\mathcal{F}_t,(i,j))=1/2$:
	\begin{align}
		&\EE[e^{\bv^T\int_B \bD_t(x)\bp_t(x) dx} | \mathcal{F}_t, (i,j)] \nonumber \\
		&= \frac{1}{2} \left(e^{(v_i+v_j) \alpha_i (1-2\epsilon_i) P_{i,t}(B)} + e^{-(v_i+v_j) \alpha_i (1-2\epsilon_i) P_{i,t}(B)} \right) \nonumber \\
		&= \cosh((v_i+v_j) \alpha_i (1-2\epsilon_i) P_{i,t}(B) ) \label{eq:eq3_caseA}
	\end{align}
	where we used the fact that $(e^a+e^{-a})/2=\cosh(a)$. Second, considering the complementary case $P_{i,t}([0,b])>1/2$, it follows that $b>\hat{X}_{i,t}$, and:
	\begin{align}
		&\EE[e^{\bv^T\int_B \bD_t(x)\bp_t(x) dx} | \mathcal{F}_t, (i,j)] \nonumber \\
		&= \cosh((v_i+v_j) \alpha_i (1-2\epsilon_i) P_{i,t}(B^c) ) \label{eq:eq3_caseB}
	\end{align}
	Combining the two cases (\ref{eq:eq3_caseA}) and (\ref{eq:eq3_caseB}) and using the definition of $\mu_{i,t}(B)$, we obtain:
	\begin{align*}
		&\EE[e^{\bv^T\int_B \bD_t(x)\bp_t(x) dx} | \mathcal{F}_t, (i,j)] \\
		&= \cosh\left(v_i \alpha_i (1-2\epsilon_i) \mu_{i,t}(B) \right)
	\end{align*}
	The proof is completed by substituting this expression into (\ref{eq:eq3}).
\end{proof}

\section{Proof of Lemma \ref{lemma:lemmaX}}
\begin{proof}
Using the definition of $\Lambda_t$, Jensen's inequality and the eigenrelation $\bv^T=\bv^T\bA$:
\begin{align*}
	&\Lambda_t(B) \\
	  &= \frac{1}{e^{\bv^T\bP_t(B)}} \EE_{i,j}\left[ e^{\bv^T\bA_{i\to j}\bP_t(B)} \cosh((v_i+v_j)\alpha_i(1-2\epsilon_i)\mu_{i,t}(B)) \right] \\
		&= \frac{1}{e^{\bv^T\bP_t(B)}} \EE_{i,j}\Bigg[ \frac{1}{2} e^{\bv^T\bA_{i\to j}\bP_t(B)+(v_i+v_j)\alpha_i(1-2\epsilon_i)\mu_{i,t}(B)}  \\
		&\quad + \frac{1}{2} e^{\bv^T\bA_{i\to j}\bP_t(B)-(v_i+v_j)\alpha_i(1-2\epsilon_i)\mu_{i,t}(B)}  \Bigg] \\
		&\geq \frac{1}{e^{\bv^T \bP_t(B)}} \Bigg( \frac{1}{2} e^{\bv^T\bA\bP_t(B)+\EE_{i,j}[(v_i+v_j)\alpha_i(1-2\epsilon_i)\mu_{i,t}(B)]}  \\
		&\quad + \frac{1}{2} e^{\bv^T\bA\bP_t(B)-\EE_{i,j}[(v_i+v_j)\alpha_i(1-2\epsilon_i)\mu_{i,t}(B)]} \Bigg) \\
		&= \frac{1}{2} \frac{1}{e^{\bv^T\bP_t(B)}} \Bigg( e^{\bv^T\bP_t(B)} e^{\EE_{i,j}[(v_i+v_j)\alpha_i(1-2\epsilon_i)\mu_{i,t}(B)]} \\
		&\quad + e^{\bv^T\bP_t(B)} e^{-\EE_{i,j}[(v_i+v_j)\alpha_i(1-2\epsilon_i)\mu_{i,t}(B)]} \Bigg) \\
		&= \cosh(\EE_{i,j}[(v_i+v_j)\alpha_i(1-2\epsilon_i)\mu_{i,t}(B)]) \\
		&\geq \cosh(\EE_i[v_i \alpha_i (1-2\epsilon_i) \mu_{i,t}(B)]) \geq 1
\end{align*}
We also used the fact that $\frac{e^a+e^{-a}}{2}=\cosh(a) \geq 1$ for all reals $a$. Since $\Lambda_t(B) \stackrel{a.s.}{\longrightarrow} 1$, it follows from the bound above that $\cosh(\EE_i[v_i \alpha_i (1-2\epsilon_i) \mu_{i,t}(B)]) \stackrel{a.s.}{\longrightarrow} 1$. From the property of $\cosh(\cdot)$, it follows that
\begin{equation*}
	\sum_{i=1}^M q_i v_i \alpha_i (1-2\epsilon_i) \mu_{i,t}(B) \stackrel{a.s.}{\longrightarrow} 0
\end{equation*}
which further implies $(1-2\epsilon_i) \mu_{i,t}(B) \stackrel{a.s.}{\longrightarrow} 0$ for all $i$, since $\min v_i>0$. Furthermore, $\mu_{i,t}(B) \stackrel{a.s.}{\longrightarrow} 0$ for $i\in \mathcal{I}_1$. The proof is complete.
\end{proof}

\section{Proof of Lemma \ref{lemma:lemmaD}}
\begin{proof}
Integrating both sides of the recursion (\ref{eq:rand_density_evol}):
\begin{equation} \label{eq:update_B}
	\bP_{t+1}(B) = \bA_t \bP_t(B) + \bd_{t+1}(B)
\end{equation}
Unrolling (\ref{eq:update_B}) over $R$ steps:
\begin{align*}
	\bP_{t+R}(B) &= \bA_{t+R-1}\cdots \bA_{t} \bP_t(B) \\
		&\quad + \sum_{k=0}^{R-1} \bA_{t+R-1} \cdots \bA_{t+R-k} \bd_{t+R-k}(B)
\end{align*}
Since the product of stochastic matrices is also a stochastic matrix, Proposition \ref{prop:Amat} implies:
\begin{align*}
	&V_{t+R}(B) \\
		&= \max_i P_{i,t+R}(B) - \min_i P_{i,t+R}(B) \\
		&\leq \tau_1(\bA_{t+R-1}\cdots \bA_{t}) V_t(B) \\
		&\quad + \max_{i,j} \sum_{k=0}^{R-1} \Bigg( [\bA_{t+R-1} \cdots \bA_{t+R-k} \bd_{t+R-k}(B)]_i \\
		&\quad - [\bA_{t+R-1} \cdots \bA_{t+R-k} \bd_{t+R-k}(B)]_j \Bigg) \\
		&\leq \tau_1(\bA_{t+R-1}\cdots \bA_{t}) V_t(B) \\
		&\quad + \sum_{k=0}^{R-1} \Bigg( \max_i [\bA_{t+R-1} \cdots \bA_{t+R-k} \bd_{t+R-k}(B)]_i \\
		&\quad - \min_i [\bA_{t+R-1} \cdots \bA_{t+R-k} \bd_{t+R-k}(B)]_i \Bigg) \\
		&\leq \tau_1(\bA_{t+R-1}\cdots \bA_{t}) V_t(B) \\
		&\quad + \sum_{k=0}^{R-1} \left( \max_i d_{i,t+R-k}(B) - \min_i d_{i,t+R-k}(B) \right)
\end{align*}
\end{proof}

\section{Proof of Theorem \ref{thm:thmA}}
\begin{proof}
From Lemma \ref{lemma:lemmaD}, we obtain:
\begin{align}
	&\EE[V_{t+R}(B)|\mathcal{F}_t] \leq \EE[\tau_1(\bA_{t+R-1}\cdots \bA_{t})|\mathcal{F}_t] V_t(B) \nonumber \\
		&\quad + \sum_{k=0}^{R-1} \EE\left[ \max_i d_{i,t+R-k}(B) - \min_i d_{i,t+R-k}(B) \Bigg| \mathcal{F}_t \right]  \label{eq:expA}
\end{align}
We next show that the remainder is asymptotically negligible-i.e.,
\begin{equation} \label{eq:remainder_asymp_0}
	\sum_{k=0}^{R-1} \EE\left[ \max_i d_{i,t+R-k}(B) - \min_i d_{i,t+R-k}(B) \Bigg| \mathcal{F}_t \right] \to 0.
\end{equation}
Re-writing the remainder term:
\begin{align*}
	&\sum_{k=0}^{R-1} \EE\left[ \max_i d_{i,t+R-k}(B) - \min_i d_{i,t+R-k}(B) \Bigg| \mathcal{F}_t \right] \\
	&=\sum_{k=0}^{R-1} \EE\left[ \EE\left[\max_i d_{i,t+R-k}(B) - \min_i d_{i,t+R-k}(B)\Big|\mathcal{F}_{t+R-k-1} \right] \Bigg| \mathcal{F}_t \right]
\end{align*}
In order to show (\ref{eq:remainder_asymp_0}), it suffices to show:
\begin{align}
	\EE&\left[\max_i d_{i,t+R-k}(B) - \min_i d_{i,t+R-k}(B)\Big|\mathcal{F}_{t+R-k-1} \right] \nonumber \\
		&\leq 4 \max_i \{ \alpha_i (1-2\epsilon_i) \mu_{i,t+R-k-1}(B)\}  \label{eq:sufficient_asymp_0}
\end{align}
Note that if (\ref{eq:sufficient_asymp_0}) holds, then Lemma \ref{lemma:lemmaX} implies that $\mu_{i,t+R-k-1}(B) \stackrel{a.s.}{\to} 0$ as $t+R-k-1 \to \infty$, and as a result, (\ref{eq:remainder_asymp_0}) follows. Thus, we next focus on proving the bound (\ref{eq:sufficient_asymp_0}).

Using Proposition \ref{prop:lse}, we obtain for any $\gamma > 0$:
\begin{align}
	\EE &\left[ \max_i d_{i,t+R-k}(B) - \min_i d_{i,t+R-k}(B)\Big|\mathcal{F}_{t+R-k-1} \right] \nonumber \\
	&\leq \frac{1}{\gamma} \EE\Bigg[ \log\left(\sum_{l=1}^M \exp(\gamma d_{l,t+R-k}(B))\right) \nonumber \\
	&\qquad + \log\left(\sum_{l=1}^M \exp(-\gamma d_{l,t+R-k}(B))\right) \Bigg| \mathcal{F}_{t+R-k-1} \Bigg] \nonumber \\
	&\leq \frac{1}{\gamma} \Bigg[ \log\left(\sum_{l=1}^M \EE[\exp(\gamma d_{l,t+R-k}(B))|\mathcal{F}_{t+R-k-1}] \right) \nonumber \\
	&\qquad + \log\left(\sum_{l=1}^M \EE[\exp(-\gamma d_{l,t+R-k}(B))|\mathcal{F}_{t+R-k-1}] \right) \Bigg]   \label{eq:expB}
\end{align}
where we used Jensen's inequality and the linearity of expectation.

Next, note that the (conditional) moment generating functions of the innovation terms can be written as a weighted average of hyperbolic cosines:
\begin{align}
	&\EE[e^{\beta d_{l,t+R-k}(B)} | \mathcal{F}_{t+R-k-1}] \nonumber \\
		&= \sum_{i,j=1}^M q_i P_{i,j} \cosh\left( \beta \alpha_i (1-2\epsilon_i) [\be_i+\be_j]_l \mu_{i,t+R-k-1}(B) \right) \label{eq:cond_mgf_d}
\end{align}
for any $\beta\in \RR$. To prove (\ref{eq:cond_mgf_d}), set $t'=t+R-k-1$, and proceed similarly as in the proof of Lemma \ref{lemma:lemmaC}:
\begin{align*}
	\EE&[e^{\gamma d_{l,t'+1}(B)}|\mathcal{F}_{t'}] \\
		&= \EE_{i,j}[ \EE[ e^{\gamma d_{l,t'+1}(B)} | \mathcal{F}_{t'}, (i,j)] | \mathcal{F}_{t'}] \\
		&= \sum_{i,j} q_i P_{i,j} \EE\left[ e^{\gamma \int_B [\bD_{i\to j}(x) \bp_{t'}(x)]_l dx} \right] \\
		&= \sum_{i,j} q_i P_{i,j} \cosh\left( \gamma \alpha_i (1-2\epsilon_i) [\be_i+\be_j]_l \mu_{i,t'}(B) \right)
\end{align*}
Plugging (\ref{eq:cond_mgf_d}) into (\ref{eq:expB}) and using Proposition \ref{prop:lse} again, we obtain:
\begin{align*}
	&\EE\left[ \max_i d_{i,t'+1}(B) - \min_i d_{i,t'+1}(B) \Bigg| \mathcal{F}_{t'} \right] \\
		&\leq \frac{2}{\gamma} \log\left(\sum_{l=1}^M \sum_{i,j} q_i P_{i,j} \cosh\left( \gamma \alpha_i (1-2\epsilon_i) [\be_i+\be_j]_l \mu_{i,t'}(B) \right) \right) \\
		&\leq \frac{2}{\gamma} \log\left(\sum_{i=1}^M \sum_{i,j} q_i P_{i,j} \exp\left( \gamma \alpha_i (1-2\epsilon_i) [\be_i+\be_j]_l \mu_{i,t'}(B) \right) \right) \\
		&\leq \frac{2}{\gamma} \log\left(\sum_{i=1}^M \max_{i,j} \exp\left( \gamma \alpha_i (1-2\epsilon_i) [\be_i+\be_j]_l \mu_{i,t'}(B) \right) \right) \\
		&\leq \frac{2}{\gamma} \log\left(\sum_{i=1}^M \exp\left( 2\gamma \max_i\{\alpha_i (1-2\epsilon_i) \mu_{i,t'}(B)\} \right) \right) \\
		&\leq 4 \left( \max_i\left\{ \alpha_i (1-2\epsilon_i) \mu_{i,t'}(B) \right\} + \frac{\log M}{\gamma} \right)
\end{align*}
We have now proven (\ref{eq:sufficient_asymp_0}) by taking $\gamma \to \infty$ to tighten the bound.

Plugging (\ref{eq:sufficient_asymp_0}) into (\ref{eq:expA}):
\begin{align}
	\EE&[V_{t+R}(B)|\mathcal{F}_t] \leq \EE[\tau_1(\bA_{t+R-1}\cdots \bA_{t})|\mathcal{F}_t] V_t(B) \nonumber \\
		&+ 4 \sum_{k=0}^{R-1} \max_i \left\{ \alpha_i (1-2\epsilon_i) \mu_{i,t+R-k-1}(B) \right\} \label{eq:expC}
\end{align}
Note that from Assumption \ref{assump:bounded_times}, it follows that $\EE[\tau_1(\bA_{t+R-1}\cdots \bA_{t})|\mathcal{F}_t] \leq 1-\epsilon <1$ for some $\epsilon \in (0,1)$.

Lemma \ref{lemma:lemmaX} implies that $\mu_{i,t+R-k-1}(B) \stackrel{a.s.}{\to} 0$, for all $i\in \mathcal{I}_1$. Note that here we used the positivity of the weights $\alpha_i$ along with the fact that $\epsilon_i < 1/2$. Define the non-negative sequence
\begin{equation*}
	\delta_t^{(R)} \stackrel{\text{def}}{=} 4 \sum_{k=0}^{R-1} \max_i \left\{ \alpha_i (1-2\epsilon_i) \mu_{i,t+R-k-1}(B) \right\}
\end{equation*}
The above implies $\delta_t^{(R)} \stackrel{a.s.}{\to} 0$ as $t\to\infty$. Taking the unconditional expectation of both sides in (\ref{eq:expC}):
\begin{equation} \label{eq:expD}
	\EE[V_{t+R}(B)] \leq (1-\epsilon) \EE[V_t(B)] + \EE[\delta_t^{(R)}]
\end{equation}
where $\EE[\delta_t^{(R)}] \to 0$ by the dominated convergence theorem. Using induction on (\ref{eq:expD}), we obtain for all $t=kR$:
\begin{equation*}
	\EE[V_{t}(B)] \leq (1-\epsilon)^{t/R} \EE[V_0(B)] + \sum_{l=0}^{t/R-1} (1-\epsilon)^l \EE[\delta_{t-(l+1)R}^{(R)}]
\end{equation*}
Taking limits of both sides and using $\EE[V_0(B)]<\infty$:
\begin{align*}
	\limsup_{k\to\infty} \EE[V_{kR}(B)] &\leq \left(\lim_{k\to\infty} (1-\epsilon)^k \right) \EE[V_0(B)] \\
		&\quad + \lim_{k\to\infty} \sum_{l=0}^{k-1} (1-\epsilon)^l \EE[\delta_{(k-1-l)R}^{(R)}] = 0
\end{align*}
Thus, the subsequence $\EE[V_{kR}(B)]$ converges to zero since $V_{t}(B)\geq 0$. Since $\EE[V_{t+1}(B)] \leq \EE[V_t(B)] + \EE[\delta_t^{(1)}]$ for all $t\in \NN$ and $\EE[\delta_t^{(1)}]\to 0$, it follows that the whole sequence $\EE[V_t(B)]$ converges to zero. Markov's inequality further implies $V_t(B)\stackrel{i.p.}{\to} 0$. The proof is complete.

\end{proof}

\section{Proof of Lemma \ref{lemma:lemmaE}}
\begin{proof}
	Evaluating the density update at $x=X^*$:
	\begin{align*}
		&\bp_{t+1}(X^*) = \bA_{t}\bp_t(X^*) + \bD_t(X^*) \bp_t(X^*)  \\
			&= \sum_{l\neq i_t,j_t} p_{l,t}(X^*) \be_l \\
			&+ \left[(1-\alpha_{i_{t}}) p_{j_t,t}(X^*) + \alpha_{i_{t}}2 P(Y_{i_t,t+1}|Z_{i_t,t})p_{i_t,t}(X^*) \right] \\
			&\qquad \times (\be_{i_t}+\be_{j_t})
	\end{align*}
	where $Z_{i,t}=I(X^*\in A_{i,t})$ is the query input to the noisy channel and $P(Y_{i,t+1}|Z_{i,t})$ models the binary symmetric channel for the $i$th agent. Taking the logarithm of both sides and using Jensen's inequality, we obtain for a collaborating agent pair $(i_t,j_t)$:
	\begin{align*}
		&\log p_{i_t,t+1}(X^*) = \log [ (1-\alpha_{i_{t}}) p_{j_t,t}(X^*) \\
			&\qquad + \alpha_{i_{t}} (2 P(Y_{i_t,t+1}|Z_{i_t,t})) p_{i_t}(X^*) ] \\
			&\quad \geq (1-\alpha_{i_{t}}) \log p_{j_t,t}(X^*) + \alpha_{i_{t}} \log p_{i_t,t}(X^*) \\
			&\qquad + \alpha_{i_{t}} \log (2 P(Y_{i_t,t+1}|Z_{i_t,t}))
	\end{align*}
	Writing this in vector form with the understanding that the logarithm of a vector is taken component-wise:
	\begin{equation} \label{eq:lb_relation}
		\log \bp_{t+1}(X^*) \succeq \bA_t \log \bp_{t}(X^*) + \underline{\alpha} \log(2 P(Y_t|Z_t)) (\be_{i_t}+\be_{j_t})
	\end{equation}
	where we defined the response and query variables at time $t$ as $Y_{t}=Y_{i_t,t+1}$ and $Z_t=Z_{i_t,t}$.
	
	Define the matrix product $\bPhi_{t_1:t_2} \stackrel{\text{def}}{=} \bA_{t_2}\bA_{t_2-1}\cdots \bA_{t_1}$ for $t_1 \leq t_2$, and $\bPhi_{t_1:t_2}=\bI$ for $t_1>t_2$.
	Using induction on (\ref{eq:lb_relation}), we obtain:
	\begin{align}
		&\log \bp_{t+R}(X^*) \succeq \bPhi_{t:t+R-1} \log \bp_t(X^*) \nonumber  \\
		&\qquad + \underline{\alpha} \sum_{k=1}^{R} \bPhi_{t+R+1-k:t+R-1} (\be_{i_{t+R-k}} + \be_{j_{t+R-k}}) \nonumber \\
		&\qquad \qquad \times  \log(2 P(Y_{t+R-k}|Z_{t+R-k}))   \label{eq:lb_relation2}
	\end{align}
	Let $\bc \succ 0$ be the left-eigenvector of $\bPhi=\bPhi_{t:t+R-1}$ (cf. Assumption \ref{assump:bounded_times}). Define the positive constants $\underline{c} \stackrel{\text{def}}{=} \min_i c_i$, and $\alpha_R \stackrel{\text{def}}{=} \min_i \left( \min\{\alpha_i,1-\alpha_i\} \right)^R$. Left-multiplying (\ref{eq:lb_relation2}) by $\bc^T$:
	\begin{align*}
		&\bc^T \log \bp_{t+R}(X^*) \geq \bc^T \log \bp_t(X^*) \\ 
			&\qquad + \underline{\alpha} \cdot \underline{c} \cdot \alpha_R \sum_{k=1}^{R} \log(2 P(Y_{t+R-k}|Z_{t+R-k}))
	\end{align*}
	where we used the bound $\bc^T\bPhi_{t+R+1-k:t+R-1}\be_i \geq \underline{c} \cdot \alpha_R, \forall i$, for $k=1,\dots,R$. By induction, we have for $m\in \NN$:
	\begin{align*}
		&\bc^T \log \bp_{mR}(X^*) \geq \bc^T \log \bp_0(X^*) \\
			&\qquad + \underline{\alpha} \cdot \underline{c} \cdot \alpha_R \sum_{l=1}^{m} \sum_{k=1}^{R} \log(2 P(Y_{(m-l)R-k}|Z_{(m-l)R-k}))
	\end{align*}
	Then, using the strong law of large numbers (LLN):
	\begin{align*}
		&\liminf_{m\to\infty} \frac{1}{m} \bc^T \log \bp_{mR}(X^*) \\
		&\geq \underline{\alpha} \cdot \underline{c} \cdot \alpha_R \lim_{m\to\infty} \frac{1}{m} \sum_{l=1}^{m} \sum_{k=1}^{R} \log(2 P(Y_{(m-l)R-k}|Z_{(m-l)R-k})) \\
		&= \underline{\alpha} \cdot \underline{c} \cdot \alpha_R \cdot \EE\left[ \sum_{k=1}^{R} \log(2 P(Y_{R-k}|Z_{R-k})) \right] \\
		&\geq \underline{\alpha} \cdot \underline{c} \cdot \alpha_R \cdot C\left(\max_{i\in \mathcal{I}_1} \epsilon_i\right) =: K > 0
	\end{align*}
	where $C(\epsilon)$ is the channel capacity of a BSC with crossover probability $\epsilon$.
\end{proof}


%


\bibliographystyle{IEEEtran}
\bibliography{refs}

\end{document}